\newtheorem{Proposition}{Proposition}
\newtheorem{Remark}{Remark}
    \def\ang#1{\mbox{$\langle #1 \rangle$}}
    \newcommand{\qh}{{\bf h}}
    \newcommand{\qx}{{\bf x}}
    \newcommand{\qy}{{\bf y}}
    \newcommand{\qz}{{\bf z}}
    \newcommand{\qA}{{\bf A}}
    \newcommand{\qH}{{\bf H}}
    \newcommand{\qI}{{\bf I}}
    \newcommand{\qP}{{\bf P}}
    \newcommand{\qQ}{{\bf Q}}
    \newcommand{\qR}{{\bf R}}
    \newcommand{\qS}{{\bf S}}
    \newcommand{\qW}{{\bf W}}
    \newcommand{\qX}{{\bf X}}
    \newcommand{\qY}{{\bf Y}}
    \newcommand{\qZ}{{\bf Z}}
    \newcommand{\qone}{{\bf 1}}
    \newcommand{\qOmega}{{\boldsymbol \Omega}}
    \newcommand{\qmu}{{\boldsymbol \mu}}
    \newcommand{\tc}{{\tilde{c}}}
    \newcommand{\tq}{{\tilde{q}}}
    \newcommand{\hr}{{\hat{r}}}
    \newcommand{\hs}{{\hat{s}}}
    \newcommand{\hq}{{\hat{q}}}
    \newcommand{\hp}{{\hat{p}}}
    \newcommand{\bp}{{\bar{p}}}
    \newcommand{\bv}{{\bar{v}}}
    \newcommand{\whqX}{{\widehat{\qX}}}
    \newcommand{\whqZ}{{\widehat{\qZ}}}
    \newcommand{\whqH}{{\widehat{\qH}}}
    \newcommand{\whX}{{\widehat{X}}}
    \newcommand{\whH}{{\widehat{H}}}
    \newcommand{\whZ}{{\widehat{Z}}}
    \newcommand{\wtqY}{{\widetilde{\qY}}}
    \newcommand{\wtY}{{\widetilde{Y}}}
    \newcommand{\wty}{{\widetilde{y}}}
    \newcommand{\uqX}{{\underline{\qX}}}
    \newcommand{\bbC}{{\mathbb C}}
    \newcommand{\calF}{{\mathcal F}}
    \newcommand{\calM}{{\mathcal M}}
    \newcommand{\calN}{{\mathcal N}}
    \newcommand{\calT}{{\mathcal T}}
    \newcommand{\calX}{{\mathcal X}}
    \newcommand{\calqH}{\boldsymbol{\cal H}}
    \newcommand{\calqX}{\boldsymbol{\cal X}}
    \newcommand{\calqQ}{\boldsymbol{\cal Q}}
    \newcommand{\calqZ}{\boldsymbol{\cal Z}}
    \newcommand{\tr}{{\sf tr}}
    \newcommand{\Ex}{{\sf E}}
    \newcommand{\Varx}{{\sf Var}}
    \newcommand{\Extr}{\operatornamewithlimits{\sf Extr}}
    \newcommand{\mse}{{\sf mse}}
    \newcommand{\sfQ}{{\sf Q}}
    \newcommand{\sign}{{\sf sign}}
    \newcommand{\rmd}{{\rm d}}
    \newcommand{\rmD}{{\rm D}}
    \newcommand{\sfb}{{\sf b}}
    \newcommand{\sfc}{{\sf c}}
    \newcommand{\sfd}{{\sf d}}
    \newcommand{\sfj}{{\sf j}}
    \newcommand{\sfs}{{\sf s}}
    \newcommand{\sft}{{\sf t}}
    \newcommand{\sfB}{{\sf B}}
    \newcommand{\sfF}{{\sf F}}
    \newcommand{\sfH}{{\sf H}}
    \newcommand{\sfP}{{\sf P}}
    \newcommand{\sfX}{{\sf X}}
    \newcommand{\sfZ}{{\sf Z}}
    \newcommand{\scP}{\mathscr{P}}
\begin{document}

\title{Bayes-Optimal Joint Channel-and-Data Estimation for Massive MIMO with Low-Precision ADCs}

\author{Chao-Kai~Wen,~Chang-Jen~Wang,~Shi~Jin,~Kai-Kit~Wong,~and~Pangan~Ting\thanks{C.-K. Wen and C.-J. Wang are with the Institute of Communications Engineering, National Sun Yat-sen University, Kaohsiung, Taiwan (e-mail: $\rm chaokai.wen@mail.nsysu.edu.tw$). S. Jin is with the National Mobile Communications Research Laboratory, Southeast University, Nanjing 210096, P. R. China. K. Wong is with the Department of Electronic and Electrical Engineering, University College London, United Kingdom. P. Ting is with the Industrial Technology Research Institute, Hsinchu, Taiwan.}}


\maketitle

\begin{abstract}
This paper considers a multiple-input multiple-output (MIMO) receiver with very low-precision analog-to-digital convertors (ADCs) with the goal of developing
massive MIMO antenna systems that require minimal cost and power. Previous studies demonstrated that the training duration should be {\em relatively long} to
obtain acceptable channel state information. To address this requirement, we adopt a joint channel-and-data (JCD) estimation method based on Bayes-optimal
inference. This method yields minimal mean square errors with respect to the channels and payload data. We develop a Bayes-optimal JCD estimator using a recent
technique based on approximate message passing. We then present an analytical framework to study the theoretical performance of the estimator in the large-system
limit. Simulation results confirm our analytical results, which allow the efficient evaluation of the performance of quantized massive MIMO systems and provide
insights into effective system design.

\end{abstract}

\begin{IEEEkeywords}
 Bayes-optimal inference, joint channel-and-data estimation, low-precision ADC, massive MIMO, replica method.
\end{IEEEkeywords}

\section*{I. Introduction}
Fifth-generation (5G) mobile communication systems are expected to achieve a $1,000$-fold increase in capacity, a $10$-fold increase in spectral and energy efficiencies, and a $25$-fold increase in average cell throughput \cite{Wang-14COM-Mag}. Such significant enhancements can be achieved with large-scale multiple-input multiple-output (MIMO) antenna systems, which are also referred to as ``massive MIMO'' systems, e.g., \cite{Marzetta-10TW,Larsson-14COMMag,Andrews-14JSAC,Wang-14COM-Mag}. These systems employ hundreds, or even thousands, of antennas at base stations (BSs) to serve tens or hundreds of user terminals with the same time–frequency resources. As such, array gains are expected to grow infinitely with the number of antennas at the BSs, in which case the radiated energy efficiency increases dramatically and multiuser interference is eliminated completely.

However, the high dimensionality of massive MIMO systems considerably increases hardware cost and power consumption. In particular, the hardware complexity and power consumption of analog-to-digital converters (ADCs) increase exponentially with the number of bits per sample \cite{Walden-99JSAC} and thus present a major obstacle. This drawback has
motivated the use of low-cost low-precision ADCs (e.g., $1\text{-}3$ bits) for antennas, which has resulted in \emph{quantized} MIMO systems.\footnote{ADCs with a typical precision of $8\text{-}12$ bits are used in modern communication systems to process received signals in the digital domain. In this paper, the ``quantized'' MIMO system specifically represents a MIMO system equipped with very low-precision ADCs (e.g., $1\text{-}3$ bits).} Such
coarse quantization leads to the failure of all communication theories, as well as signal processing techniques dedicated to high-resolution quantization
\cite{Singh-09TCOM,Koch-10,Zhang-12TCOM,Wang-13TCOM}. Some aspects of quantized MIMO systems have been studied in the literature on capacity analysis
\cite{Mezghani-08ISIT,Mo-15TSP,Liang-15ArXiv}, energy efficiency analysis \cite{Bai-15TETT,Orhan-15ITA}, feedback codebook design \cite{Mo-15ArXiv-Feedback}, data detection \cite{Nakamura-08ISITA,Mezghani-10ISIT,Risi-14ArXiv,Wang-15TWCOM,Jacobsson-15ArXiv,Choi-15TSP,Xu-14JSM,Choi-15Arxiv,Studer-15Arxiv}, and channel estimation \cite{Mezghani-10WSA,Risi-14ArXiv,Mo-14ACSSP,Jacobsson-15ArXiv,Choi-15Arxiv,Studer-15Arxiv}.

The current work is focused on data detection and channel estimation for quantized MIMO systems. Previous studies on these subject mainly assumed perfect channel state information (CSI) at the receiver (CSIR) or considered problems in channel estimation. The use of coarse quantization greatly reduces the number of \emph{effective} measurements and hinders the easier acquisition of CSIR in quantized MIMO systems than in unquantized ones. As explained in \cite{Risi-14ArXiv}, a one-bit quantized MIMO system requires an extremely long training sequence (e.g., approximately $50$ times the number of users) to achieve the same performance as that in a full CSI case. This requirement motivates us to consider joint channel-and-data (JCD) estimation, in which the estimated payload data are utilized to aid channel estimation. A major advantage of JCD estimation is that relatively few pilot symbols are required to achieve equivalent channel and data estimation performances \cite{Takeuchi-13TIT,Ma-14TSP}.

Although an improved performance with the JCD technique is expected, its performance in \emph{quantized} MIMO systems is not clearly understood.\footnote{In the context of an \emph{unquantized} MIMO system, several aspects of the JCD estimation have been widely studied, see e.g., \cite{Takeuchi-13TIT,Ma-14TSP}.} The most related work appears to be that in \cite{Jacobsson-15ArXiv}, which investigated the achievable throughput in a one-bit quantized single-input single-output (SISO) channel using JCD estimation (i.e., least squares channel estimation jointly on pilot and data symbols). For the one-bit quantized MIMO system in \cite{Jacobsson-15ArXiv}, the authors considered a pilot-only scheme with least-squares channel estimation, followed by data detection that utilizes the maximal-ratio combining. Although high-order constellation, such as 16-QAM, was found to be capable of being supported by the one-bit quantized MIMO system, which outperforms the ones reported in \cite{Risi-14ArXiv} for QPSK, the long training sequence is still a requirement. Hence, the fundamental performance limits on quantized MIMO systems imposed by the JCD estimation represents an interesting research topic.

In the present work, we propose a framework for analyzing the achievable performance of quantized MIMO systems with JCD estimation. Unlike other JCD estimation schemes based on suboptimal criteria \cite{Jacobsson-15ArXiv,Takeuchi-13TIT,Ma-14TSP}, the Bayes-optimal inference for JCD estimation is used in this work because this approach generates minimum mean square errors (MMSEs) with respect to (w.r.t.) the channels and data symbols. In the conference version of this work \cite{Wen-15ISIT}, our simulation results indicate that the Bayes-optimal JCD estimator exhibits a significant advantage over pilot-only schemes in quantized MIMO systems. In addition to the derivations omitted in \cite{Wen-15ISIT}, the main contributions of this work are summarized as follows.
\begin{itemize}

    \item To implement the Bayes-optimal JCD estimator, we use a variant of belief propagation (BP) in approximating the marginal distributions of each data and channel component. We modify the bilinear generalized approximate message passing (BiG-AMP) algorithm in \cite{Parker-14TSP} and adapt it to the quantized MIMO system by providing the corresponding closed-form expressions for the nonlinear steps. We refer to this scheme as the GAMP-based JCD algorithm.\footnote{In this paper, the Bayes-optimal JCD estimator is regarded as the \emph{theoretical} optimal estimator, whereas the GAMP-based JCD algorithm is regarded as a \emph{practical method} for approximating the theoretical optimal estimator.}

    \item By performing a large-system analysis based on the replica method from statistical physics, we show the {\em decoupling principle} for the Bayes-optimal JCD estimator. That is, in the large-system regime, the input output relationship of a quantized MIMO system using the Bayes-optimal JCD estimator is decoupled into a bank of scalar additive white Gaussian noise (AWGN) channels w.r.t.~the data symbols and channel response. This decoupling property allows the characterization of several system performances of interest in an intuitive manner. In particular, the average symbol error rate (SER) w.r.t.~the data symbols and the average MSE w.r.t.~the channel estimate for the Bayes-optimal JCD estimator are determined.

    \item Finally, computer simulations are performed to verify the efficiency of the proposed GAMP-based JCD algorithm and the accuracy of our analysis. The high accuracy of our results ensures the quick and efficient evaluation of the performances of quantized MIMO systems. Several useful observations related to system design are derived from the analysis.
\end{itemize}

{\em Notations}---Throughout, for any matrix $\qA$, $A_{ij}$ refers to the $(i,j)$th entry of $\qA$, $\qA^T$ denotes the transpose of $\qA$, $\qA^H$ is the conjugate transpose of $\qA$, and $\tr(\qA)$ denotes its trace. Also, $\qI$ denotes the identity matrix, ${\bf 0}$ is the zero matrix, $\|\cdot\|_{\sfF}$ denotes the Frobenius norm, $\Ex[\cdot]$ represents the expectation operator, ${\rm log}(\cdot)$ is the natural logarithm, and $\sign(\cdot)$ is the signum function. In addition, a random vector $\qz$ drawn from the proper complex Gaussian distribution of mean $\qmu$ and covariance $\qOmega$ is described by the probability density function: 
\begin{equation*}
 \calN_{\bbC}(\qz;\qmu,\qOmega) = \frac{1}{\det(\pi \qOmega)} e^{-(\qz-\qmu)^H\qOmega^{-1}(\qz-\qmu)},
\end{equation*}
where $\det(\cdot)$ returns the determinant. We write $\qz \sim\calN_{\bbC}(\qz;\qmu,\qOmega)$. With $\rmD \qz$ denoting the real (or complex)  Gaussian integration measure, for an $n \times 1$ real valued vector $\qz$, we have
\begin{equation*}
\rmD\qz = \prod_{i=1}^n \phi(z_i)\, \rmd z_i \mbox{~~with~~} \phi(z_i) \triangleq \frac{e^{-\frac{z_i^2}{2}}}{\sqrt{2\pi}};
\end{equation*}
or ${\rmD\qz = \prod_{i=1}^n  \frac{e^{-({\rm Re}(z_i))^2-({\rm Im}(z_i))^2} }{\pi}} \rmd {\rm Re}(z_i) \rmd {\rm Im}(z_i)$ for the complex valued vector, where
${\rm Re}(\cdot)$ and ${\rm Im}(\cdot)$ extracts the real and imaginary components, respectively. Finally,
\begin{equation*}
    \Phi(x) \triangleq \frac{1}{\sqrt{2\pi}} \int_{-\infty}^x e^{-\frac{t^2}{2}} \, \rmd t .
\end{equation*}
denotes the cumulative Gaussian distribution function \cite{Hazewinkel-01}.

\begin{figure}
\begin{center}
\resizebox{3.5in}{!}{%
\includegraphics*{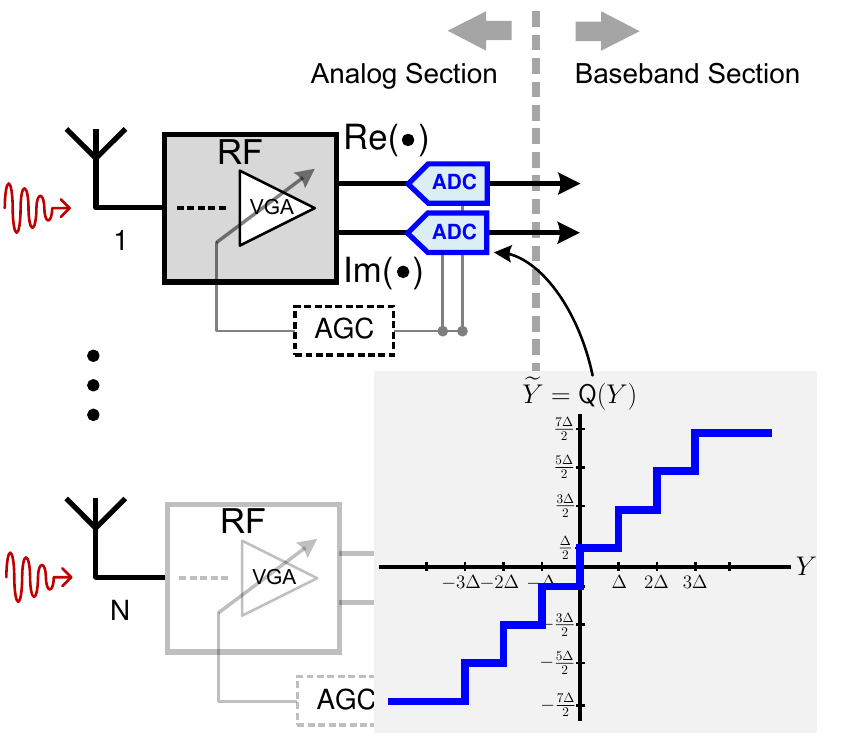} }%
\caption{The quantized MIMO antenna system.}\label{fig:QuantizedMIMO}
\end{center}
\end{figure}

\section*{II. System Model}
We consider a MIMO uplink system in which a BS is equipped with $N$ receive antennas serving $K$ single-antenna users. The channel is assumed to be flat block fading, wherein the channel remains constant over $T$ consecutive symbol intervals (i.e., a block). The received signal $\qY = [Y_{nt}] \in \bbC^{N \times T}$ over the block interval can be written in matrix form as
\begin{equation} \label{eq:sys}
    \qY = \frac{1}{\sqrt{K}}\qH \qX + \qW = \qZ + \qW,
\end{equation}
where $\qX = [X_{kn}] \in \bbC^{K \times T}$ denotes the transmit symbols in the block,  $\qH = [H_{nk}]\in \bbC^{N \times K}$ denotes the channel matrix containing the fading coefficients between the transmit antennas and the receive antennas, $\qW = [W_{nt}]\ \in \bbC^{N \times T}$ represents the additive temporally and spatially white Gaussian noise with zero mean and element-wise variance $\sigma_{w}^2$, and we define $\qZ = [Z_{nt}] \triangleq \frac{1}{\sqrt{K}}\qH \qX \in \bbC^{N \times T}$.

On the receiver side, each received signal is down-converted into analog baseband $Y_{nt}$ and then discretized using a \emph{complex-valued} quantizer $\sfQ_{c}$, as illustrated in Figure \ref{fig:QuantizedMIMO}. Here, each complex-valued quantizer $\sfQ_{c}(\cdot)$ is defined as $\widetilde{Y}_{nt} = \sfQ_{c}(Y_{nt}) \triangleq \sfQ({\rm Re}\{ Y_{nt}\}  ) + \sfj\sfQ({\rm Im}\{ Y_{nt} \} )$, i.e., the real and imaginary parts are quantized separately. In practice, a variable gain amplifier (VGA) with an automatic gain control (AGC) is used before the quantization to ensure that the analog baseband is within a proper range, e.g., ${(-1,+1)}$. Analog baseband $Y_{nt}$ is assumed to include the AGC gain and is thus in a proper range. The resulting quantized signal $\wtqY = [\wtY_{nt}] \in \bbC^{N \times T}$ is therefore given by
\begin{equation} \label{eq:qsys}
    \wtqY = \sfQ_{c}{\left(\qY\right)} = \sfQ_{c}{\left(\qZ + \qW\right)},
\end{equation}
where the quantization is applied element-wise.

Specifically, each complex-valued quantizer $\sfQ_{c}$ consists of two real-valued $\sfB$-bit quantizers $\sfQ$. Each real-valued quantizer maps a real-valued input to one of the $2^\sfB$ bins, which are characterized by the set of ${2^\sfB-1}$ thresholds $[r_1,r_2,\dots,r_{2^{\sfB}-1}]$, such that $ -\infty < r_1 < r_2 < \cdots <r_{2^{\sfB}-1} < \infty$. For notational convenience, we define $r_0 = -\infty$ and $r_{2^{\sfB}} = \infty$. The output is assigned a value in $(r_{b-1}, \,r_{b}]$ when the quantizer input falls in the interval $(r_{b-1}, \,r_{b}]$ (namely, the $b$-th bin). For example, the threshold of a typical uniform quantizer with the quantization step size $\Delta$ is given by
\begin{equation}
    r_b = {\left( -2^{\sfB-1} + b \right)} \Delta, \mbox{~~for~~} b=1,\dots,2^{\sfB}-1,
\end{equation}
and the quantization output is assigned the value ${r_b - \frac{\Delta}{2}}$ when the input falls in the $b$-th bin.\footnote{This output assignment is only true for ${b=1,\dots,2^{\sfB}-1}$. If $b=2^\sfB$, the quantization output is assigned the value ${\left( 2^{\sfB-1} - 2^{-1} \right)} \Delta$.} Figure \ref{fig:QuantizedMIMO} shows an example of the $3$-bit uniform quantizer. Notice that in practice, the VGA gain can be adjusted to attain the desired step size $\Delta$.

The channel matrix $\qH$ needs to be estimated at the receiver; thus, the first $T_{\sft}$ symbols of the block of $T$ symbols serve as the pilot sequences. The remaining $T_{\sfd}= T-T_{\sft}$ symbols are used for data transmissions. The training and data phases are referred to as $\sft$-phase and $\sfd$-phase, respectively. This setting is equivalent to partitioning $\qX$ and $\wtqY$ as
\begin{subequations}
\begin{align}
\qX &= \Big[\qX_{\sft}~\qX_{\sfd} \Big],~\mbox{with $\qX_{\sft}  \in \bbC^{K\times T_{\sft}} $, $\qX_{\sfd}  \in \bbC^{K \times T_{\sfd}} $,}\\
\wtqY &= \Big[\wtqY_{\sft}~\wtqY_{\sfd} \Big],~\mbox{with $\wtqY_{\sft}  \in \bbC^{N \times T_{\sft}} $, $\wtqY_{\sfd}  \in \bbC^{N \times T_{\sfd}} $}.
\end{align}
\end{subequations}
We assume that $\qX_{\sft}$ (or $\qX_{\sfd}$) is composed of independent and identically distributed (i.i.d.) random variables $\sfX_{\sft}$ (or $\sfX_{\sfd}$) drawn from the known probability distribution $\sfP_{\sfX_{\sft}}$ (or $\sfP_{\sfX_{\sfd}}$), i.e.,
\begin{equation} \label{eq:proX}
 \sfP_{\sfX}(\qX)
 = \underbrace{\Bigg(\prod_{k=1}^{K}\prod_{t=1}^{T_{\sft}} \sfP_{\sfX_{\sft}}{(X_{\sft,kt})}\Bigg)}_{ = \sfP_{\sfX_{\sft}}(\qX_{\sft})}
   \underbrace{\Bigg( \prod_{k=1}^{K}\prod_{t=1}^{T_{\sfd}} \sfP_{\sfX_{\sfd}}{(X_{\sfd,kt})} \Bigg)}_{ = \sfP_{\sfX_{\sfd}}(\qX_{\sfd})}.
\end{equation}
Given that the pilot and data symbols should appear on constellation points uniformly, the ensemble averages of $\{X_{\sft,kt}\}$ and $\{X_{\sfd,kt}\}$ are assumed to be zero. In addition, we let $\sigma_{x_{\sft}}^2$ and $\sigma_{x_{\sfd}}^2$ be the transmit powers in the $\sft$-phase and $\sfd$-phase, respectively, i.e., $\Ex\{|\sfX_{\sft,kt}|^2\} = \sigma_{x_{\sft}}^2$ and $\Ex\{|\sfX_{\sfd,kt}|^2\} = \sigma_{x_{\sfd}}^2$. For ease of notation, we  refer an entry of $\qX$ to $X_{kt}$ instead of $X_{\sft,kt}$ or $X_{\sfd,kt}$. Therefore, we use $ \calT_{\sft}=\{1,\ldots,T_{\sft} \}$ and $\calT_{\sfd}=\{T_{\sft}+1,\ldots,T \}$ to denote the sets of symbol indices in the $\sft$-phase and $\sfd$-phase, respectively.

Similarly, we assume that each entry $H_{nk}$ is drawn from a complex Gaussian distribution $\calN_{\bbC}(0,\sigma_{h}^2)$, where $\sigma_{h}^2$ is the large-scale fading coefficient. If $\sfP_{\sfH}{(H_{nk})} \equiv \calN_{\bbC}(0,\sigma_{h}^2)$, then
\begin{equation} \label{eq:proH}
 \sfP_{\sfH}(\qH) = \prod_{n=1}^{N}\prod_{k=1}^{K} \sfP_{\sfH}{(H_{nk})}.
\end{equation}
To prevent the key features of our results from being obfuscated by complex notations, we consider the case in which all the users have the same large-scale fading factor in the main text. A generalized version of our main result, in which the users have different large-scale fading factors, is presented in Appendix E. This generalization can be easily achieved by plugging user index $k$ into $\sigma_{h}^2$.

\section*{ III. Bayes-Optimal JCD Estimation}

We consider the case in which the receiver knows the distributions of $\qH$ and $\qX$ but not their realizations.
In the conventional pilot-only scheme, the receiver first uses $\wtqY_{\sft}$ and $\qX_{\sft}$ to generate an estimate of $\qH$ and then uses the estimated channel for estimating data $\qX_{\sfd}$ from $\wtqY_{\sfd}$ \cite{Risi-14ArXiv}.
In contrast to the pilot-only scheme, we consider JCD estimation, where the BS estimates both $\qH$ and $\qX_{\sfd}$ from $\widetilde{\qY}$ given $\qX_{\sft}$. We will treat the problem under the framework of Bayesian inference, which provides a foundation for achieving the best MSE estimates \cite{Poor-94BOOK}.

\subsection*{A. Theoretical Foundation}
We define the likelihood, i.e., the distribution of the received signals under (\ref{eq:qsys}) conditional on the unknown parameters, as
\begin{equation} \label{eq:lnkelihood}
    \sfP_{\sf out}(\wtqY|\qH,\qX) \triangleq \prod_{n=1}^{N} \prod_{t=1}^{T} {\sfP_{\sf out}{\left( \wtY_{nt} \Big| Z_{nt}\right)}},
\end{equation}
where
\begin{multline} \label{eq:lnkelihood_each}
    \sfP_{\sf out}{\left( \wtY \Big| Z \right)}
    = {\left( \frac{1}{\sqrt{ \pi \sigma_{w}^2}} \int_{r_{b-1}}^{r_b} e^{-\frac{(y - {\rm Re}(Z) )^2}{\sigma_{w}^2}} \,\rmd y \right)}   \\
      \times {\left( \frac{1}{\sqrt{ \pi \sigma_{w}^2}} \int_{r_{b'-1}}^{r_{b'}} e^{-\frac{(y - {\rm Im}(Z) )^2}{\sigma_{w}^2}} \,\rmd y  \right)}
\end{multline}
when ${ {\rm Re}(\wtY) \in (r_{b-1},r_{b}]}$ and ${{\rm Im}(\wtY) \in(r_{b'-1},r_{b'}]}$. Based on the cumulative Gaussian distribution function (see the definition in Notations), (\ref{eq:lnkelihood_each}) becomes
\begin{equation} \label{eq:lnkelihood_each_Simple}
    \sfP_{\sf out}{\left( \wtY \Big| Z \right)}
    = \Psi_b\big({\rm Re}(Z)\big) \Psi_{b'}\big({\rm Im}(Z)\big),
\end{equation}
where
\begin{equation} \label{eq:def_Psi_b}
    \Psi_b(x) \triangleq \Phi{\left( \frac{\sqrt{2}(r_b- x)}{\sigma_{w}}\right)} - \Phi{\left( \frac{\sqrt{2}(r_{b-1}- x)}{\sigma_{w}}\right)}.
\end{equation}
The prior distributions of $\qX$ and $\qH$ are given by (\ref{eq:proX}) and (\ref{eq:proH}), respectively. The posterior probability can then be computed according to Bayes' rule as
\begin{equation} \label{eq:posteriorPr}
    \sfP(\qH,\qX|\wtqY) = \frac{\sfP_{\sf out}(\wtqY|\qH,\qX)\sfP_{\sfH}(\qH)\sfP_{\sfX}(\qX)}{\sfP(\wtqY)},
\end{equation}
where 
\begin{equation} \label{eq:marginalPr}
    \sfP(\wtqY) = \int_{\qH} \int_{\qX}  \, \sfP(\wtqY|\qH,\qX)\sfP_{\sfH}(\qH)\sfP_{\sfX}(\qX) \, \rmd\qH \rmd\qX
\end{equation}
is the marginal likelihood.

Given the posterior probability, an estimator for $H_{nk}$ can be obtained by the posterior mean
\begin{equation} \label{eq:estH}
    \widehat{H}_{nk} = \int  H_{nk}\scP(H_{nk}) \, \rmd H_{nk},
\end{equation}
where $${\scP(H_{nk}) = \int_{\qH\setminus H_{nk}} \!\rmd\qH \int_{\qX}\!\rmd\qX  \, \sfP(\qH,\qX|\wtqY)} $$ denotes the marginal posterior probability of $H_{nk}$.
In (\ref{eq:estH}), the notation $\int_{\qH\setminus H_{nk}}\!\rm\rmd\qH $ denotes the integration over all the variables in $\qH$ except for $H_{nk}$. Similarly, the estimator
for $X_{o,kt}$ for $o \in \{ \sft,\sfd\}$ can be obtained by the posterior mean
\begin{equation} \label{eq:estX}
    \widehat{X}_{o,kt} = \int \scP(X_{o,kt}) X_{o,kt}\, \rmd X_{o,kt},
\end{equation}
where $$\scP(X_{o,kt}) = \int_{\qH}\!\rmd\qH \int_{\qX\setminus X_{o,kt}}\!\rmd\qX \, \sfP(\qH,\qX|\wtqY)$$ is the marginal posterior probability of $X_{o,kt}$. The notation $\int_{\qX\setminus X_{o,kt}} \!\rm\rmd\qX $ denotes the integration over all the variables in $\qX$ except for $X_{o,kt}$. The posterior mean estimators (\ref{eq:estH}) and (\ref{eq:estX}) minimize the (Bayesian) MSE \cite{Poor-94BOOK} defined as
\begin{subequations} \label{eq:mseHX}
\begin{align}
    \mse(\qH) &=  \frac{1}{NK}\Ex\left\{ \| \whqH - \qH\|_{\sfF}^2 \right\},
    \label{eq:mseH} \\
    \mse(\qX_{o}) &= \frac{1}{KT_{o}} \Ex{\left\{ \| \whqX_{o} - \qX_{o} \|_{\sfF}^2 \right\}}, ~\mbox{for}~ o \in \{ \sft,\sfd\},
    \label{eq:mseX}
\end{align}
\end{subequations}
where the expectation operator is w.r.t. $\sfP(\wtqY,\qH,\qX_{o})$; moreover, $\whqH = [\widehat{H}_{nk}]$ and $\whqX_{o} = [\widehat{X}_{o,kt}]$. We refer to (\ref{eq:estH}) and (\ref{eq:estX}) as the Bayes-optimal estimator.

\begin{Remark} \label{Remark_1}
Given a {\em known} pilot matrix, $\uqX_{\sft}$, which by definition is given by ${\sfP_{\sfX_{\sft}}(\qX_{\sft}) = \delta(\qX_{\sft}-\uqX_{\sft})}$, we obtain
$\widehat{X}_{\sft,kt} =\underline{X}_{\sft,kt}$ from (\ref{eq:estX}); therefore, $\mse(\qX_{\sft}) = 0$. For the case of interest, we always have $\mse(\qX_{\sft}) = 0$. The algorithm as well as the analytical results still work even if the pilots are unknown, and the MSE can be expressed as (\ref{eq:mseX}).
\end{Remark}

\subsection*{B. Bayes-Optimal Estimator in SISO Channel}
To better understand the Bayes-optimal estimator, we first explain it in a simple SISO system.
We consider a SISO version of the system (\ref{eq:sys}) given by
\begin{equation} \label{eq:example_sys}
 Y = Z + W.
\end{equation}
Recall that $W$ is the additive white Gaussian noise with zero mean and variance $\sigma_w^2$.
After a complex-valued quantizer, $\wtY = \sfQ_{c}(Y)$ is obtained. Based on the system model (\ref{eq:sys}), $Z = HX$ should be kept. However, to facilitate interpretation, we first let $Z$ be a random variable with distribution $\sfP_{\sfZ}$. According to Bayes' rule (\ref{eq:posteriorPr}), the
posterior probability can be computed as
\begin{equation} \label{eq:margPost_z}
    \sfP(Z|\wtY) = \frac{\sfP_{\sf out}(\wtY|Z)\sfP_{\sfZ}(Z)}{\sfP(\wtY)},
\end{equation}
where $\sfP(\wtY) = \int \sfP_{\sf out}(\wtY|z) \sfP_{\sfZ}(z) \,\rmd z $ is the marginal likelihood. Then, from (\ref{eq:estH}) or (\ref{eq:estX}), the posterior mean
estimator for $Z$ is given by
\begin{equation} \label{eq:postEst_z}
    \whZ = \int z\sfP(z|\wtY)  \,\rmd z.
\end{equation}

To specify the estimator, we further assume that $Z$ is a proper complex Gaussian with mean $\hp$ and variance $v^{p}$, i.e., $\sfP_{\sfZ}(Z) =
\calN_{\bbC}(Z;\hp,v^{p})$. Then, we derive the estimator (\ref{eq:postEst_z}) under the two channels, unquantized and quantized, in the following examples.

{\noindent {\bf Example~1} (Unquantized Channel).} In this case, we have $\wtY = Y$ and $\sfP_{\sf out}(\wtY|Z) = \frac{1}{\pi \sigma_w^2} e^{-|\wtY-Z|^2/\sigma_w^2}$. Using these distributions, we obtain
\begin{align}
    \sfP_{\sf out}(\wtY|Z)\sfP_{\sfZ}(Z) &= \calN_{\bbC}(Z;\wtY,\sigma_w^2) \calN_{\bbC}(Z;\hp,v^{p}) \notag \\
    &\hspace{-2.5cm}= D \cdot
    \calN_{\bbC}{\left(Z;\frac{ v^{p}\wtY + \sigma_w^2\hp }{ \sigma_w^2+v^{p} }, \frac{\sigma_w^2 v^{p} }{ \sigma_w^2+v^{p} } \right)}, \label{eq:post_Z_num}
\end{align}
where ${D = \calN_{\bbC}(0;\wtY-\hp,\sigma_w^2+v^{p}) }$, and the second equality follows the \emph{Gaussian reproduction property} \cite[(A.7)]{Rasmussen-06BOOK}.\footnote{The product of two Gaussians gives another Gaussian \cite[(A.7)]{Rasmussen-06BOOK}:
$$\mathcal{N}_{\bbC}(x;a,A)\mathcal{N}_{\bbC}(x;b,B)=D\cdot \mathcal{N}_{\bbC}(x;c,C),$$ where $c=C(A^{-1}a+B^{-1}b)$, $C=(A^{-1}+B^{-1})^{-1}$, and $D=\mathcal{N}_{\bbC}(0;a-b,A+B)$.} Substituting (\ref{eq:post_Z_num}) into (\ref{eq:margPost_z}), we obtain
\begin{equation} \label{eq:post_Z_final}
    \sfP(Z|\wtY) = \calN_{\bbC}{\left(Z;\frac{ v^{p}\wtY + \sigma_w^2\hp }{ \sigma_w^2+v^{p} }, \frac{\sigma_w^2 v^{p} }{ \sigma_w^2+v^{p} } \right)}.
\end{equation}
The estimator (\ref{eq:postEst_z}), which is the mean of $\sfP(Z|\wtY)$ after \emph{rearranging} is determined as
\begin{equation} \label{eq:postEst_Z}
    \whZ = \hp + \frac{ v^{p}}{ \sigma_w^2+v^{p} } (\wtY - \hp ).
\end{equation}
The MSE of the estimator, which is the variance of $\sfP(Z|\wtY)$, is
\begin{equation} \label{eq:postEst_Z_var}
    v^{z} = v^{p} - \frac{(v^{p})^2 }{ \sigma_w^2+v^{p} }.
\end{equation}

{\noindent {\bf Example~2} (Quantized Channel).} If ${ {\rm Re}(\wtY) \in (r_{b-1},r_{b}]}$ and ${{\rm Im}(\wtY) \in(r_{b'-1},r_{b'}]}$, then the likelihood of the quantized measurement $\wtY$ is given by (\ref{eq:lnkelihood_each_Simple}). The calculation of the posterior mean and variance in the quantized channel is technical, but it basically follows a procedure similar to that in the unquantized channel.
A derivation is given in Appendix A, which turns out to yield
\begin{align}
    \whZ
   &= \hp + \frac{\sign(\wtY) v^{p} }{\sqrt{2(\sigma_{w}^2 + v^{p})}} \left( \frac{\phi(\eta_1)-\phi(\eta_2)}{\Phi(\eta_1)-\Phi(\eta_2)} \right),
   \label{eq:hatZ_RealGaussian} \\
    v^{z}
      &= \frac{v^{p}}{2} - \frac{(v^{p})^2}{2(\sigma_{w}^2 + v^{p})}\times \nonumber \\
    & \hspace{0.5cm} \left( \frac{\eta_1\phi(\eta_1)-\eta_2\phi(\eta_2)}{\Phi(\eta_1)-\Phi(\eta_2)}
     + \left(\frac{\phi(\eta_1)-\phi(\eta_2)}{\Phi(\eta_1)-\Phi(\eta_2)}\right)^2 \right),
     \label{eq:mseZ_RealGaussian}
\end{align}
where
\begin{subequations} \label{eq:eta_def}
\begin{align}
    \eta_1 &= \frac{\sign(\wtY)\hp-\min\{|r_{b-1}|,|r_{b}|\}}{\sqrt{(\sigma_{w}^2 + v^{p})/2}}, \\
    \eta_2 &= \frac{\sign(\wtY)\hp-\max\{|r_{b-1}|,|r_{b}|\}}{\sqrt{(\sigma_{w}^2 + v^{p})/2}}.
\end{align}
\end{subequations}
The real and imaginary parts are quantized separately, and each complex-valued channel can be decoupled into two real-valued channels. The expressions (\ref{eq:hatZ_RealGaussian}) and (\ref{eq:mseZ_RealGaussian}) are the estimators only for the real part of $Z$. To facilitate notation, we have
abused $\wtY$ and $\whZ$ in (\ref{eq:hatZ_RealGaussian}) and (\ref{eq:mseZ_RealGaussian}) to denote ${\rm Re}(\wtY)$ and ${\rm Re}(\whZ)$, respectively. The estimator for the imaginary part ${\rm Im}(\whZ)$ can be obtained analogously as (\ref{eq:hatZ_RealGaussian}) and (\ref{eq:mseZ_RealGaussian}), while $\wtY$ and $b$ should be replaced by
${\rm Im}(\wtY)$ and $b'$, respectively.

\begin{Remark}
Recall ${r_{0} = -\infty}$ and ${r_{2^{\sfB}} = \infty}$. Therefore, if $b = 1$ or $b = 2^{\sfB}$, we obtain $\phi(\eta_2) = 0$, $\eta_2\phi(\eta_2) = 0$, and $\Phi(\eta_2) = 0$. Additionally, for the special case of $\sfB = 1$ (i.e., one-bit quantizer), the expressions of (\ref{eq:hatZ_RealGaussian}) and (\ref{eq:mseZ_RealGaussian}) agree with those reported in \cite{Ziniel-15TSP}.
\end{Remark}

\begin{Remark}
For another extreme case of $\sfB \rightarrow \infty$ and $\Delta \rightarrow 0$, we return to the unquantized channel in Example 1. Instead of using the procedure in Example 1, we show how the expressions (\ref{eq:postEst_Z}) and (\ref{eq:postEst_Z_var}) can be obtained from (\ref{eq:hatZ_RealGaussian}) and (\ref{eq:mseZ_RealGaussian}). Recall that $r_{b-1}$ and $r_{b}$ are the upper and lower bin boundary positions w.r.t. the $b$-th bin, respectively. Let $r_{b-1} = r$ and $r_{b} = r_{b-1} + \rmd r$. As $\sfB \rightarrow \infty$ and $\Delta \rightarrow 0$, we obtain $\rmd r \rightarrow 0$, which results in $r_{b} \rightarrow r$ and $\eta_1 \rightarrow \eta_2 \triangleq \eta$. Furthermore, we obtain ${\Phi(\eta_1)-\Phi(\eta_2) \rightarrow \frac{ \rmd }{ \rmd r } \Phi(\eta) }$, $\phi(\eta_1)-\phi(\eta_2) \rightarrow \frac{\rmd }{ \rmd r } \phi(\eta)$, and $\eta_1\phi(\eta_1)-\eta_2\phi(\eta_2) \rightarrow \frac{ \rmd  }{ \rmd r }\eta\phi(\eta)$. By substituting these relationships into (\ref{eq:hatZ_RealGaussian})--(\ref{eq:mseZ_RealGaussian}) and applying the facts that $\frac{\rmd }{ \rmd r } \Phi(\eta) = \phi(\eta) \frac{ \rmd}{ \rmd r } \eta $, $\frac{\rmd }{ \rmd r } \phi(\eta) = - \eta \phi(\eta) \frac{ \rmd }{ \rmd r } \eta$, and $\frac{\rmd }{ \rmd r }  \eta \phi(\eta)= (1 - \eta^2) \phi(\eta) \frac{ \rmd}{ \rmd r } \eta $, we recover the same expressions as those given in (\ref{eq:postEst_Z}) and (\ref{eq:postEst_Z_var}) for the real part of $\whZ$. The imaginary part for $\whZ$ can be obtained analogously.
\end{Remark}

The aforementioned example is the estimator for $Z$. The same concept can be easily applied to the estimate of $H$ or $X$, if $Z$ is replaced by $H$ or $X$ in (\ref{eq:example_sys}). However, if $Z = HX$ and both $H$ and $X$ are unknown, the complexity of the Bayes-optimal estimator increases. In this case, the posterior probability in (\ref{eq:margPost_z}) becomes $\sfP(H,X|\wtY) = \frac{\sfP_{\sf out}(\wtY|H,X)\sfP_{\sfH}(H)\sfP_{\sfX}(X)}{\sfP(\wtY)}$, which involves two prior distributions for $H$ and $X$ as that in (\ref{eq:posteriorPr}). To implement the posterior mean estimator for $H$ and $X$, we need the marginal posterior probabilities $\scP(H) = \int \sfP(H,X|\wtY) \, \rmd X$ and $\scP(X) = \int \sfP(H,X|\wtY) \, \rmd H$, respectively. A closed form for the posterior probability $\sfP(H,X|\wtY)$ does not appear possible. Although we can resort to numerical integration to implement the estimator, the computational complexity is high. Therefore, one might consider an alternative technique; that is, the estimate of $H$ is performed with fixed $X$ and vice versa.

\begin{figure}
\begin{center}
\resizebox{3.0in}{!}{%
\includegraphics*{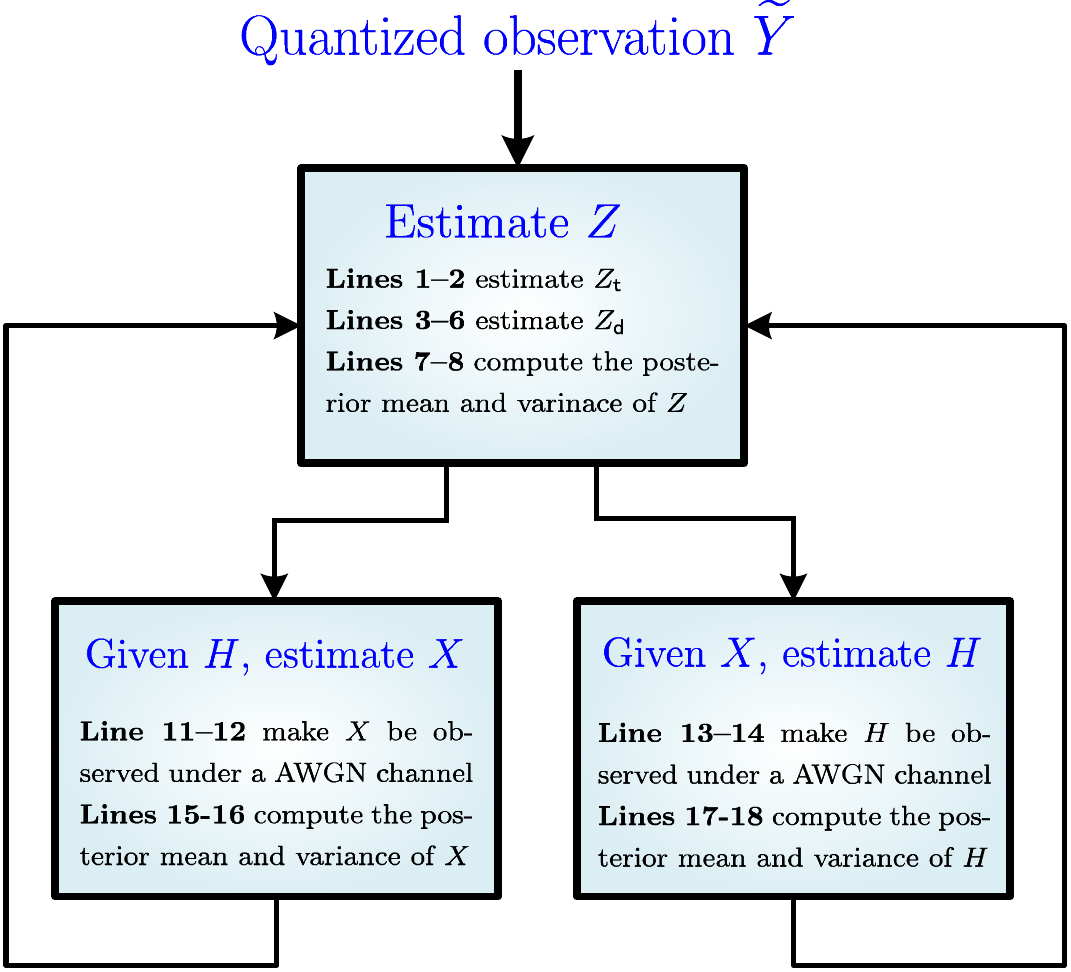} }%
\caption{A representation of the GAMP-based JCD algorithm.}\label{fig:GAMP-basedJCD}
\end{center}
\end{figure}

In the next subsection, we develop a practical algorithm for the Bayes-optimal estimator in the quantized MIMO system. Before proceeding, we intend to provide an intuition on the algorithm. A representation of the algorithm is shown in Fig.~\ref{fig:GAMP-basedJCD},
which seems to operate in the alternative manner. Conceptually, when the posterior mean and variance of $Z$ are obtained from the quantized observation $\wtY$, we can \emph{reconstruct} $Y$ and then approximate $\sfP_{\sf out}(Y|Z)$ as a Gaussian distribution. Then, the posterior mean estimator for $H$ (or $X$) can be conducted through $Y$, which is an \emph{AWGN} channel rather than a quantized channel, in an alternative manner. This representation is merely for an intuition. The accurate algorithm development takes a different route.

\subsection*{B. GAMP-Based JCD Algorithm}

\begin{figure}
\begin{center}
\resizebox{2.8in}{!}{%
\includegraphics*{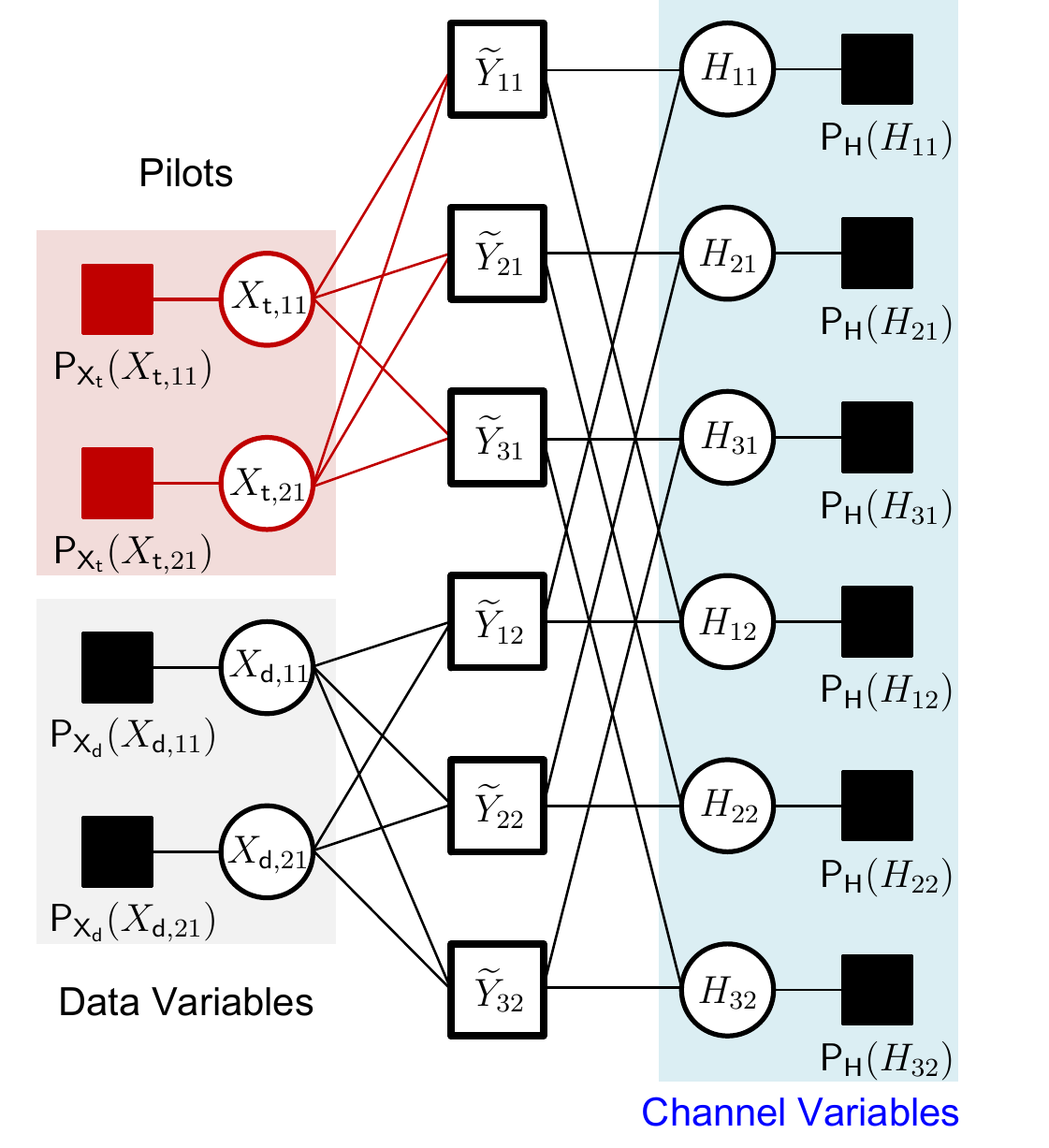} }%
\caption{Factor graph representation of the integrant of (\ref{eq:postFG}), where $N=3$, $K=2$, and $T=2$.}\label{fig:FactorGraph}
\end{center}
\end{figure}

From the discussions above, the direct computations of (\ref{eq:estH}) and (\ref{eq:estX}) are intractable because of the high-dimensional integrals in the marginal posteriors $\scP(X_{kt})$ and $\scP(H_{nk})$. To make these tractable, we first note that by combining (\ref{eq:proX})--(\ref{eq:lnkelihood}), the posterior probability (\ref{eq:posteriorPr}) can be factored into
\begin{multline} \label{eq:postFG}
    \frac{1}{\sfP(\wtqY)} \, \prod_{n=1}^{N} \prod_{t=1}^{T} {\sfP_{\sf out}{\left( \wtY_{nt} \Big| Z_{nt}\right)}}
    \times \prod_{n=1}^{N} \prod_{k=1}^{K} \sfP_{\sfH}{(H_{nk})} \\
    \times \prod_{k=1}^{K} \prod_{t=1}^{T_{\sft}} \sfP_{\sfX_{\sft}}{(X_{\sft,kt})}
    \times \prod_{k=1}^{K} \prod_{t=1}^{T_{\sfd}} \sfP_{\sfX_{\sfd}}{(X_{\sfd,kt})}.
\end{multline}
An example factor graph for (\ref{eq:postFG}) is shown in Figure~\ref{fig:FactorGraph}, where a square represents a factor node associated with the sub-constraint function $\sfP_{\sf out}{( \wtY_{nt} | Z_{nt})}$ in (\ref{eq:postFG}), whereas a circle shows a variable node associated with $H_{nk}$, $X_{\sfd,kt}$, or $X_{\sft,kt}$. The factor graph suggests the use of the canonical sum-product algorithm to compute the marginal posterior probabilities. The algorithm uses a set of message-passing equations that go from factor nodes to variable nodes and vice versa.

However, the computational complexity of the sum-product algorithm remains infeasible in the case of interest because it still involves high-dimensional integration and summation. Thus, we resort to recently developed approximation algorithms: the so-called AMP (approximate message-passing) algorithm \cite{Donoho-09PNAS} and the generalized AMP (GAMP) algorithm  \cite{Rangan-11ISIT}. In particular, the AMP algorithm, which is a variant of the sum-product algorithm, was initially proposed by Donoho {\em et al.}~\cite{Donoho-09PNAS} to solve a linear inverse problem in the context of compressive sensing. The use of GAMP in our MIMO system means that given $\qH$ is known; thus, GAMP can provide a tractable method to approximate the marginal posteriors $\scP(X_{kt})$'s. This part corresponds to addressing the message-passing equations between $\wtY_{nt}$ and $X_{kt}$, i.e., the left-hand side of Figure~\ref{fig:FactorGraph}. For the study, see \cite{Wang-15TWCOM,Wu-14JSTSP}. More recently, Parker {\em et al.}~in \cite{Parker-14TSP} applied a similar GAMP strategy, referred to as BiG-AMP, to the problem of reconstructing matrices from bilinear noisy observations (i.e., reconstructing $\qH$ and $\qX$ from $\qY$).

\begin{algorithm}[!h]\label{ago:BiGAMP-JCD}  \footnotesize
  \caption{ GAMP-based JCD Algorithm}
  \SetKwInOut{Input}{input}
  \SetKwInOut{Output}{output}
  \SetKwInOut{Initialize}{initialize}
  \SetKwInOut{Definition}{definition}

  \SetKwProg{Fn}{}{\string:}{}

  \Input{Quantized observations $\wtqY$, pilot matrix $\qX_{\sft}$, likelihood $\sfP_{\sf out}( Y | Z )$, and variable distributions $\sfP_{\sfH}(H)$ and $\sfP_{\sfX_{\sfd}}(X_{\sfd})$}
  \BlankLine
  \Output{$\whqH$, $\whqX_{\sfd}$ }
  \BlankLine
  \Definition{$\sum_{k} \triangleq \sum_{k=1}^{K}$, $\sum_{n} \triangleq \sum_{n=1}^{N}$}
  \BlankLine
  \Initialize{$\xi \leftarrow 1$;
  $\forall n,t$: $\hs_{nt}(0) = 0$, $v^{z}_{nt}(0) = 1$, $\whZ_{nt}(0) = 0$;
  $\forall n,k,t$: $v^{x}_{kt}(1) = 1$, $\whX_{\sfd,kt}(1) = 0$, $v^{h}_{nk}(1) = 1$, $\whH_{nk}(1) = 0$.
  }
  \BlankLine
  \vspace{0.1cm}
  \While{$\frac{\sum_{n,t} |\whZ_{nt}(\xi) - \whZ_{nt}(\xi-1)|^2}{ \sum_{n,t} |\whZ_{nt}(\xi-1)|^2 } > \epsilon$ {\rm \bf and} $\xi < \xi_{\max}$ }{
  \BlankLine
  \If{$t \in \calT_{\sft}$}{
  \nl $\forall n$: $v^{p}_{nt}(\xi) = \sum_{k} v_{nk}^{h}(\xi)|X_{kt}|^2 $\; \vspace{0.1cm}
  \nl $\forall n$: $\hp_{nt}(\xi) = \sum_{k} \whH_{nk}(\xi) X_{kt} - \hs_{nt}(\xi-1)v^{p}_{nt}(\xi)$\;
  }\BlankLine
  \If{$t \in \calT_{\sfd}$}{
  \nl $\forall n$: $\bv^{p}_{nt}(\xi) = \sum_{k} \big|\whH_{nk}(\xi)|^2v_{kt}^{x}(\xi) + v_{nk}^{h}(\xi)|\whX_{kt}(\xi)|^2$\; \vspace{0.1cm}
  \nl $\forall n$: $\bp_{nt}(\xi) = \sum_{k} \whH_{nk}(\xi) \whX_{kt}(\xi)$\; \vspace{0.1cm}
  \nl $\forall n$: $v^{p}_{nt}(\xi) = \bv^{p}_{nt}(\xi)+ \sum_{k} v^{h}_{nk}(\xi) v^{x}_{kt}(\xi) $\; \vspace{0.1cm}
  \nl $\forall n$: $\hp_{nt}(\xi) = \bp_{nt}(\xi) - \hs_{nt}(\xi-1)\bv^{p}_{nt}(\xi)$\;
  }\BlankLine
  \nl $\forall n,t$: $v^{z}_{nt}(\xi) = \Varx{\left\{ Z_{nt} \big|\hp_{nt}(\xi), v^{p}_{nt}(\xi)\right\}}$\; \vspace{0.1cm}
  \nl $\forall n,t$: $\whZ_{nt}(\xi) = \Ex{\left\{ Z_{nt} \big|\hp_{nt}(\xi), v^{p}_{nt}(\xi)\right\}}$\; \vspace{0.1cm}
  \vspace{0.2cm}

  \nl $\forall n,t$: $v^{s}_{nt}(\xi) = (1-v^{z}_{nt}(\xi)/v^{p}_{nt}(\xi))/v^{p}_{nt}(\xi)$\; \vspace{0.1cm}
  \nl $\forall n,t$: $\hs_{nt}(\xi) = (\whZ_{nt}(\xi)-\hp_{nt}(\xi))/v^{p}_{nt}(\xi)$\; \vspace{0.1cm}
  \vspace{0.2cm}

  \nl $\forall k,t$: $v^{r}_{kt}(\xi) = \left[\sum_{n}|\whH_{nk}(\xi)|^2 v^{s}_{nt}(\xi)\right]^{-1}$\; \vspace{0.1cm}
  \nl $\forall k,t$: $\hr_{kt}(\xi) = \whX_{kt}(\xi)\left(1-v^{r}_{kt}(\xi)\sum_{n}v^{h}_{nk}(\xi)v^{s}_{nt}(\xi) \right)$ \vspace{0.05cm}\\
  \hspace{1.85cm} $+ v^{r}_{kt}(\xi) \sum_{n}\whH_{nk}^{*}(\xi)\hs_{nt}(\xi) $ \; \vspace{0.1cm}

  \nl $\forall n,k$: $v^{q}_{nk}(\xi) = \Big[\sum_{t \in \calT_{\sft}}|X_{kt}|^2 v^{s}_{nt}(\xi)\Big]^{-1}$\; \vspace{0.1cm}
  \nl $\forall n,k$: $\hq_{nk}(\xi) = \whH_{nk}(\xi)\left(1-v^{q}_{nk}(\xi)\sum_{t \in \calT_{\sfd}}v^{x}_{kt}(\xi)v^{s}_{nt}(\xi) \right)$ \vspace{0.05cm}\\
  \hspace{2.0cm} $+ v^{q}_{nk}(\xi) \Big( \sum_{t \in \calT_{\sft}}X_{kt}^{*}\hs_{nt}(\xi)$ \vspace{0.05cm}\\
  \hspace{3.2cm} $+ \sum_{t \in \calT_{\sfd}}\whX_{kt}^{*}(\xi)\hs_{nt}(\xi) \Big)$ \; \vspace{0.1cm}
  \vspace{0.2cm}

  \nl $\forall k,t \in \calT_{\sfd}$: $v^{x}_{kt}(\xi+1) = \Varx{\left\{ X_{kt} \big|\hr_{kt}(\xi), v^{r}_{kt}(\xi)\right\}}$\; \vspace{0.1cm}
  \nl $\forall k,t \in \calT_{\sfd}$: $\whX_{kt}(\xi+1) = \Ex{\left\{ X_{kt} \big|\hr_{kt}(\xi), v^{r}_{kt}(\xi)\right\}}$\; \vspace{0.1cm}
  \vspace{0.2cm}

  \nl $\forall n,k$: $v^{h}_{nk}(\xi+1) = \Varx{\left\{ H_{nk} \big|\hq_{nk}(\xi), v^{q}_{nk}(\xi)\right\}}$\; \vspace{0.1cm}
  \nl $\forall n,k$: $\whH_{nk}(\xi+1) = \Ex{\left\{ H_{nk} \big|\hq_{nk}(\xi), v^{q}_{nk}(\xi)\right\}}$\; \vspace{0.1cm}

  \vspace{0.2cm}
  $\xi \leftarrow \xi+1$ \;
  }
\end{algorithm}

BiG-AMP for JCD estimation is presented in Algorithm \ref{ago:BiGAMP-JCD} for a given instantiation of the quantized observations $\wtqY$, the pilot matrix $\qX_{\sft}$, as well as the likelihood $\sfP_{\sf out}( \wtqY | \qZ )$, and the variable distributions $\sfP_{\sfH}(\qH)$ and $\sfP_{\sfX_{\sfd}}(\qX_{\sfd})$. We refer to this scheme as the GAMP-based JCD algorithm, which follows the same structure as BiG-AMP \cite{Parker-14TSP} except for the steps in dealing with the known pilots, i.e., $t \in \calT_{\sft}$ in Algorithm \ref{ago:BiGAMP-JCD}. Reference \cite{Parker-14TSP} presents the derivation details of BiG-AMP.

To better understand the algorithm, we provide some intuition on each step of Algorithm \ref{ago:BiGAMP-JCD}.
Fig.~\ref{fig:GAMP-basedJCD} illustrates the structure of the algorithm. In lines 3--6, an estimate ${\hat{\qP}_{\sfd} =[\hp_{nt}]}$ of the matrix product $\qZ_{\sfd} = \qH \qX_{\sfd}$ and the corresponding variances $\{ v^{p}_{nt} : {t \in \calT_{\sfd}} \}$ are estimated. ${\bar{\qP}_{\sfd} =[\bp_{nt}] }$ and $\{ \bv^{p}_{nt} : {t \in \calT_{\sfd}} \}$ in
lines 3 and 4 can be regarded as auxiliary variables.\footnote{$\bar{\qP}_{\sfd}$ is a plug-in estimate of $\qZ_{\sfd}$, whereas $\hat{\qP}_{\sfd} =[\hp_{nt}]$ provides a refinement by introducing the ``Onsager'' correction into the context of AMP. For details, see \cite{Parker-14TSP}.} The same procedure is followed in lines 1 and 2 but for the matrix product $\qZ_{\sft} = \qH \qX_{\sft}$. Given that the pilot matrix $\qX_{\sft}$ is known, the corresponding variances for $\qX_{\sft}$ are zero, i.e.,
$v^{x}_{kt} = 0$ for ${t \in \calT_{\sft}}$. With $v^{x}_{kt} = 0$, we thus have plugged $\bp_{nt}$ and $\bv^{p}_{nt}$ into $\hp_{nt}$ and $v^{p}_{nt}$ for ${t \in
\calT_{\sft}}$ to obtain lines 1 and 2. The posterior means $\whqZ =[\whZ_{nt}]$ and variances $\{ v^{z}_{nt} \}$
of $\qZ$ are obtained in lines 7 and 8 using $\{\hp_{nt},v^{p}_{nt}\}$. Subsequently, the posterior moments are used in lines 9 and 10 to compute the residual $\hat{\qS} =[\hs_{nt}]$ and the inverse residual variances $\{ v^{s}_{nt} \}$. In lines 11 and 12, these residual terms are used to compute ${\hat{\qR} = [\hr_{kt}]}$ and $\{ v^{r}_{kt} \}$, where $\hr_{kt}$ can be interpreted as an observation of $X_{\sfd,kt}$ under an AWGN channel with zero mean and a variance of $v^{r}_{kt}$. Similarly, $\hat{\qQ} = [\hq_{nk}]$ and $v^{q}_{nk}$, where $\hq_{nt}$ can be interpreted as an observation of $H_{nk}$ under an AWGN channel with noise variance of $v^{q}_{nk}$, are evaluated in lines 13 and 14. Finally, the posterior mean $\whqX = [\whX_{kt}]$ and variances $\{ v^{x}_{kt} \}$ are estimated in lines 15 and 16 by taking into account the prior $\sfP_{\sfX_{\sfd}}$; the same is performed for $H_{nk}$ in lines 17 and 18.

\subsection*{C. Nonlinear Steps}
Algorithm \ref{ago:BiGAMP-JCD} provides a high-level description of BiG-AMP to perform JCD estimation. Lines 7--8, 15--16, and 17--18 of Algorithm \ref{ago:BiGAMP-JCD} perform the posterior mean and variance estimators for $Z_{nt}$, $X_{kt}$, and $H_{nk}$, respectively.  A remarkable feature of the algorithm
is that at each iteration, the estimates of $Z_{nt}$, $X_{kt}$, and $H_{nk}$ can separately serve as the estimators over a bank of scalar channels. Next, we describe these nonlinear steps in detail. For brevity, we omit the subscript indexes $n,k,t$ hereafter.

First, we notice that lines 7--8 compute the posterior mean and variance of $Z$; in this computation, the expectation operator is w.r.t.
\begin{equation*} 
    \scP(Z) = \frac{{\sfP_{\sf out}( \wtY | Z )} {\sfP_{\sf Z}( Z )}}
    { \int {\sfP_{\sf out}( \wtY | z' )} \sfP_{\sf Z}( z' )\,\rmd z' },
\end{equation*}
where $ \sfP_{\sf out} ( \wtY | Z )$ is given by (\ref{eq:lnkelihood_each_Simple}), and ${\sfP_{\sf Z}( Z )} = \calN_{\bbC}(Z; \hp, v^{p})$. This process is identical to that implemented in Example 2. As a result, lines 7--8 of Algorithm \ref{ago:BiGAMP-JCD} for each real-valued channel can be computed using the expressions in (\ref{eq:hatZ_RealGaussian})--(\ref{eq:mseZ_RealGaussian}).

Next, we discuss the nonlinear steps used to compute $(\whX,v^{x})$ and $(\whH,v^{h})$ in lines 15--16 and 17--18 of Algorithm \ref{ago:BiGAMP-JCD}. Specifically, the expectations and variances in lines 15-16 and 17-18 are taken w.r.t.~the marginal posterior
\begin{align}
    \scP(X_{\sfd}) &=
    \frac{\calN_{\bbC}(X_{\sfd}; \hr, v^{r}) {\sfP_{\sfX_{\sfd}}( X_{\sfd} )}}
    { \int \calN_{\bbC}(x_{\sfd}'; \hr, v^{r}) {\sfP_{\sfX_{\sfd}}( x_{\sfd}' )}\,\rmd x_{\sfd}'}, \label{eq:margPost_x2}\\
    \scP(H) &=
    \frac{ \calN_{\bbC}(H; \hq, v^{q}) {\sfP_{\sfH}( H )}}
    { \int \calN_{\bbC}(h'; \hq, v^{q}) {\sfP_{\sfH}(h')}\,\rmd h' }.  \label{eq:margPost_h}
\end{align}
These posterior probabilities are similar to those of $(\whZ,v^{z})$, except that $\sfP_{\sf out}$ is replaced with a Gaussian distribution and the corresponding priors $\sfP_{\sfX_{\sfd}}$ and $\sfP_{\sfH}$ are used in place of $\sfP_{\sfZ}$. In fact, the former change results in an estimator that is fundamentally different from that
in the case of $Z$. In Example 1, if $\sfP_{\sf out}$ is a Gaussian distribution, then the estimator is operated in an \emph{unquantized} channel. That is, the estimates of $H$ and $X$ in Algorithm \ref{ago:BiGAMP-JCD} are based on AWGN channels.

To specify $(\whX,v^{x})$, we consider the square QAM constellation with ${2\nu \times 2\nu}$ points
\begin{align}
    \calX &= \Big\{ X_{\rm R}+\sfj X_{\rm I} : X_{\rm R},X_{\rm I} \in \left\{ -(2\nu-1)\zeta, \ldots, -3\zeta,-\zeta, \right. \nonumber \\
    & \hspace{3.2cm} \left. \zeta,3\zeta,\ldots , (2\nu-1)\zeta \right\} \Big\},
\end{align}
where $\zeta = 1/\sqrt{2((2\nu)^2-1)/3}$ is the power normalization factor.
If $X$ is drawn from the constellation points uniformly, i.e., $\sfP_{\sfX_{\sfd}}(X_{\sfd}) = 1/(2\nu)^2$ for $X_{\sfd} \in \calX$, then lines 15--16 of Algorithm \ref{ago:BiGAMP-JCD} can be computed using
\begin{align}
    \whX_{\sfd}
    &= \frac{\sum_{i=1}^{\nu}(2i-1)\zeta F_{i}^{\sfs}{\left( {\rm Re}(\hr) \right)} }
    {\sum_{i=1}^{\nu} F_{i}^{\sfc}{\left( {\rm Re}(\hr) \right)}} \nonumber \\
    & \hspace{0.5cm} +\sfj \frac{\sum_{i=1}^{\nu}(2i-1)\zeta F_{i}^{\sfs}{\left( {\rm Im}(\hr) \right)}}
    {\sum_{i=1}^{\nu} F_{i}^{\sfc}{\left( {\rm Im}(\vartheta_i) \right)}}, \label{eq:hatX_RealGaussian} \\
    v^{x}
      & = \frac{\sum_{i=1}^{\nu}(2i-1)^2\zeta^2 F_{i}^{\sfs}{\left( {\rm Re}(\hr) \right)} }
    {\sum_{i=1}^{\nu}F_{i}^{\sfc}{\left( {\rm Re}(\hr) \right)}}
    \nonumber \\
     & \hspace{0.5cm}  +
     \frac{\sum_{i=1}^{\nu}(2i-1)^2\zeta^2 F_{i}^{\sfs}{\left( {\rm Im}(\hr) \right)}}
    {\sum_{i=1}^{\nu}F_{i}^{\sfc}{\left( {\rm Im}(\hr) \right)}}
     - |\whX_{\sfd}|^2, \label{eq:mseX_RealGaussian}
\end{align}
where
\begin{align*}
    F_{i}^{\sfs}(x) =  e^{-\frac{(2i-1)^2\zeta}{v^{r}}} \sinh{\left( \frac{2(2i-1)\zeta}{v^{r}} x \right)}, \\
    F_{i}^{\sfc}(x) =  e^{-\frac{(2i-1)^2\zeta}{v^{r}}} \cosh{\left( \frac{2(2i-1)\zeta}{v^{r}} x \right)}.
\end{align*}
Finally, recall that ${\sfP_{\sfH}{(H_{nk})} = \calN_{\bbC}(0,\sigma_{h}^2)}$. Then lines 17--18 of Algorithm \ref{ago:BiGAMP-JCD} can be computed using
\begin{equation}
    \whH =  \frac{\sigma_{h}^2}{\sigma_{h}^2+v^{q}} \, \hq ~~\mbox{and}~~
    v^{h} = v^{q} - \frac{ (v^{q})^2}{\sigma_{h}^2+v^{q}}.\label{eq:hatH_RealGaussian}
\end{equation}
The derivation of (\ref{eq:hatH_RealGaussian}) is identical to that in Example 2.

Using the above nonlinear steps (\ref{eq:hatZ_RealGaussian})--(\ref{eq:mseZ_RealGaussian}) and (\ref{eq:hatX_RealGaussian})--(\ref{eq:hatH_RealGaussian}), we
implement the GAMP-based JCD algorithm based on the open-source ``GAMPmatlab'' software suite.

\section*{ IV. Performance Analysis}
In this section, we present a framework to analyze the Bayes-optimal JCD estimator. The key strategy for analyzing $\mse(\qH)$ and $\mse(\qX_{\sfd})$ is through the average free entropy
\begin{equation}\label{eq:FreeEn}
    \calF \triangleq \frac{1}{K^2}\Ex_{\wtqY}{\left\{\log \sfP(\wtqY) \right\}},
\end{equation}
where $\sfP(\wtqY)$ denotes the marginal likelihood in (\ref{eq:marginalPr}), that is, the partition function. Aligned with the argument in \cite{Krzakala-13ISIT,Kabashima-14ArXiv}, $\mse(\qX_{\sfd})$ and $\mse(\qH)$ are saddle points of the average free entropy. Thus, the goal is reduced to finding (\ref{eq:FreeEn}).

The analysis is based on a large-system limit, that is, when $N, K, T \to \infty$ but the ratios
\begin{equation}
N/K= \alpha,~~T/K = \beta,~~ T_\sft/K =\beta_\sft,~~ T_\sfd/K =\beta_\sfd,
\end{equation}
are fixed and finite. For convenience, we simply use $K \rightarrow \infty$ to denote this large-system limit. Even in the large-system limit, the computation of (\ref{eq:FreeEn}) is difficult. The major difficulty in computing (\ref{eq:FreeEn}) is the expectation of the logarithm of $\sfP(\wtqY)$, which, nevertheless, can be
facilitated by rewriting $\calF$ as\footnote{We use the following formula from right to left
\begin{equation*}
    \lim_{\tau \rightarrow 0} \frac{ \partial }{ \partial \tau}  \log \Ex\{ {\sf A}^\tau \} =
    \lim_{\tau \rightarrow 0}  \frac{\Ex\{  {\sf A}^\tau \log {\sf A} \}}{ \Ex\{ {\sf A}^\tau \} }
    = \Ex\{ \log {\sf A} \},
\end{equation*}
where ${\sf A}$ is any positive random variable.}
\begin{equation}\label{eq:LimF}
\calF = \frac{1}{K^2} \lim_{\tau\rightarrow 0}\frac{\partial}{\partial \tau}\log \Ex_{\wtqY}{\left\{\sfP^{\tau}(\wtqY)\right\}}.
\end{equation}

The expectation operator is then transformed inside the log-function. We first evaluate $\Ex_{\wtqY}\{\sfP^{\tau}(\wtqY)\}$ for an integer-valued $\tau$, and then generalize the result to any positive real number $\tau$. This technique, called {\em the replica method}, is from the field of statistical physics \cite{Nishimori-01BOOK}, which is not mathematically rigorous. Nevertheless, the replica method has proved successful in a number of highly difficult problems in statistical physics \cite{Nishimori-01BOOK} and information theory \cite{Tanaka-02IT,Moustakas-03TIT,Guo-05IT,Muller-03TSP,Wen-07IT,Hatabu-09PRE,Takeuchi-13TIT,Girnyk-14TWC}.
Some of the results originally obtained by the replica method have been subsequently validated by other approaches (e.g., \cite{Bayati-11IT,Zhang-13JSAC}).
Under the assumption of $K \rightarrow \infty$ and replica symmetry (RS), an asymptotic free entropy can be obtained later in Proposition \ref{Pro_FreeEntropy}. We check the accuracy of the replica-based analysis via simulations.
Proposition \ref{Pro_FreeEntropy} involves several new parameters.

\subsection*{A. Parameters of Proposition \ref{Pro_FreeEntropy}}
Most parameters (except for some auxiliary parameters) of Proposition \ref{Pro_FreeEntropy} can be illustrated systematically by the scalar AWGN channels:
\begin{subequations} \label{eq:scalCh}
\begin{align}
 Y_{X_{\sfd}} &=  \sqrt{\tq_{{X_{\sfd}}}} X_{\sfd} + W_{X_{\sfd}}, \label{eq:scalCh_X} \\
 Y_{H} &= \sqrt{\tq_{H}} H + W_{H}, \label{eq:scalCh_H}
\end{align}
\end{subequations}
where $W_{H},W_{X_{\sfd}} \sim \calN_{\bbC}(0,1)$, $H \sim \sfP_{H}$, and $X_{\sfd} \sim  \sfP_{X_{\sfd}}$. We shall specify how the parameters $\tq_{H}$ and $\tq_{{X_{\sfd}}}$ are related to the asymptotic free entropy later in Proposition \ref{Pro_FreeEntropy}. Thus far, we know that the parameters $\tq_{{X_{\sfd}}}$ and $\tq_{H}$ serve as the signal-to-noise ratios (SNRs) of the above AWGN channels. The likelihoods under (\ref{eq:scalCh_X}) and (\ref{eq:scalCh_H}) are  respectively given by
\begin{subequations}
\begin{align}
 \sfP(Y_{{X_{\sfd}}}|X_{\sfd}) &= \frac{1}{\pi}e^{-|Y_{{X_{\sfd}}}-\sqrt{\tq_{{X_{\sfd}}}} {X_{\sfd}}|^2}, \\
 \sfP(Y_{H}|H) &= \frac{1}{\pi}e^{-|Y_{H}-\sqrt{\tq_{H}}H|^2},
\end{align}
\end{subequations}
and then we get the posteriors
\begin{subequations}
\begin{align}
 \sfP(X_{\sfd}|Y_{{X_{\sfd}}}) &=
 \frac{{\sfP_{\sfX_{\sfd}}( X_{\sfd} )} \sfP(Y_{{X_{\sfd}}}|X_{\sfd}) }
    { \int\!\rmd x'_{\sfd}\, {\sfP_{\sfX_\sfd}( x'_{\sfd} )}\sfP(Y_{{X_{\sfd}}}|x'_{\sfd}) }, \\
 \sfP(H|Y_{H}) &= \frac{{\sfP_{\sfH}( H )} \sfP(Y_{{H}}|H) }
    { \int\!\rmd h' \, {\sfP_{\sfH}( h' )}\sfP(Y_{{H}}|h') }.
\end{align}
\end{subequations}

Some important quantities are obtained with the posteriors. For example, the posterior mean estimators for $X_{\sfd}$ and $H$ read
\begin{subequations} \label{eq:est_XH}
\begin{align}
 \whX_{\sfd} &= \int\!\rmd X_{\sfd} \, X_{\sfd}\sfP(X_{\sfd}|Y_{{X_{\sfd}}}), \\
 \whH &= \int\!\rmd H \, H \sfP(H|Y_{H}).
\end{align}
\end{subequations}
The MSEs of the estimators are thus given by
\begin{subequations} \label{eq:mse_XH}
\begin{align}
 \mse_{X_{\sfd}} &= \Ex\left\{|X_{\sfd}- \whX_{\sfd}|^2\right\}, \\
  \mse_{H} &= \Ex\left\{|H - \whH |^2\right\},
\end{align}
\end{subequations}
where the expectations are taken over ${\sfP( Y_{X_{\sfd}},X_{\sfd} )} $ and ${\sfP( Y_{H},H )}$, respectively. Equations (\ref{eq:hatX_RealGaussian})--(\ref{eq:hatH_RealGaussian}) are explicit expressions of the above quantities. In addition, the mutual information between $Y_{{X_{\sfd}}}$ and ${X_{\sfd}}$ reads \cite{Cover-BOOK}
\begin{equation} \label{eq:MIX_scalCh}
  I({X_{\sfd}};Y_{{X_{\sfd}}}|\tq_{{X_{\sfd}}})
   = -\Ex_{Y_{{X_{\sfd}}}} {\left\{ \log \Ex_{{{X_{\sfd}}}}\left\{ e^{-\left|Y_{{X_{\sfd}}} - \sqrt{\tq_{{X_{\sfd}}}} {X_{\sfd}} \right|^2} \right\} \right\}} - 1,
\end{equation}
and the mutual information between $Y_{H}$ and ${H}$ is
\begin{equation} \label{eq:MIH_scalCh}
  I({H};Y_{H}|\tq_{H}) = -\Ex_{Y_{H}} {\left\{ \log \Ex_{{H}}\left\{ e^{-\left|Y_{H} - \sqrt{\tq_{H}}{H} \right|^2} \right\} \right\}}- 1.
\end{equation}

From (\ref{eq:scalCh}), an inference that another scalar AWGN channel w.r.t.~the $\sft$-phase exists can be made, i.e.,
\begin{equation}
 Y_{X_{\sft}} =  \sqrt{\tq_{{X_{\sft}}}} X_{\sft} + W_{X_{\sft}}, \label{eq:scalCh_Xt}
\end{equation}
where ${W_{X_{\sft}} \sim \calN_{\bbC}(0,1)}$ and ${X_{\sft} \sim  \sfP_{X_{\sft}}}$. As the pilot is known, ${\mse_{X_{\sft}} = 0}$ can be obtained easily following the argument in Remark \ref{Remark_1}; and the mutual information between $Y_{{X_{\sft}}}$ and ${X_{\sft}}$ is $0$. As all the performances relating to (\ref{eq:scalCh_Xt}) are trivial, we will not use (\ref{eq:scalCh_Xt}) in the following discussions.

\subsection*{B. Analytical Results}

\begin{Proposition} \label{Pro_FreeEntropy}
As $K \to \infty$, the asymptotic free entropy is
\begin{align}
 \calF &= \alpha \sum_{o \in \{\sft,\,\sfd\} } \beta_{o} \Bigg( \sum_{b=1}^{2^{\sfB}} \int\!\rmD v\,
   \Psi_{b}\left( V_{o} \right)
   \log \Psi_{b}\left( V_{o} \right)
    \Bigg) \notag \\
  & \hspace{0.35cm} - \alpha  I(H;Y_{H}|\tq_{H})
  - \beta_{\sfd}  I(X_{\sfd};Y_{X_{\sfd}}|\tq_{X_{\sfd}}) \notag \\
  & \hspace{0.35cm} + \alpha (c_{H} - q_{H})\tq_{H} + \sum_{o \in \{\sft,\,\sfd\} } \beta_{o} (c_{X_{o}}-q_{X_{o}}) \tq_{X_{o}}, \label{eq:FreeEntropyFinal}
\end{align}
where
\begin{multline}\label{eq:Psi_def}
 \Psi_{b}(V_{o})
\triangleq \Phi{\left(\frac{ \sqrt{2}r_{b}-V_{o}}{\sqrt{\sigma_{w}^2 + c_{H}c_{X_{o}}-q_{H}q_{X_{o}}}} \, \right)}\\
\quad - \Phi{\left(\frac{ \sqrt{2}r_{b-1}-V_{o}}{\sqrt{\sigma_{w}^2 + c_{H}c_{X_{o}}-q_{H}q_{X_{o}}}} \, \right)};
\end{multline}
${V_{o} \triangleq \sqrt{q_{H}q_{X_{o}}} v}$ for $o \in \{\sft,\,\sfd\}$;
$I(\cdot)$'s are given by (\ref{eq:MIX_scalCh}) and (\ref{eq:MIH_scalCh}); and
$c_{X_{o}} \triangleq \Ex\{|X_{o}|^2\} = \sigma_{x_{o}}^2$, $c_{H} \triangleq \Ex\{|H|^2\} = \sigma_{h}^2$. In (\ref{eq:FreeEntropyFinal}), the other parameters $\{ q_{X_{o}}, q_{H}, \tq_{X_{o}},\tq_{H} \}$ are obtained from the solutions to the fixed-point equations
\begin{subequations} \label{eq:fxiedPoints}
\begin{align}
 \tq_{H} & = \beta_{\sft} q_{X_{\sft}} \chi_{\sft} + \beta_{\sfd} q_{X_{\sfd}} \chi_{\sfd},
 && q_{H} = c_{H} - \mse_{H}, \label{eq:asyVarH} \\
 \tq_{X_{\sft}} & = \alpha q_{H} \chi_{\sft},
 && q_{X_{\sft}} = c_{X_{\sft}} - \mse_{X_{\sft}}, \label{eq:asyVarXt} \\
 \tq_{X_{\sfd}} &= \alpha q_{H} \chi_{\sfd},
 && q_{X_{\sfd}}  = c_{X_{\sfd}} - \mse_{X_{\sfd}}, \label{eq:asyVarXd}
\end{align}
\end{subequations}
where $\mse_{X_{\sft}} = 0$, and $\mse_{H}$ and $\mse_{X_{\sfd}}$ are given by (\ref{eq:mse_XH}). In (\ref{eq:fxiedPoints}), we have defined
\begin{equation} \label{eq:chi_def}
 \chi_{o} \triangleq \sum_{b=1}^{2^{\sfB}} \int\!\rmD v \frac{\Big(\Psi'_{b}\left(\sqrt{q_{H}q_{X_{o}}} v \right)\Big)^2}{\Psi_{b}\left(
 \sqrt{q_{H}q_{X_{o}}} v \right)},~\mbox{for }{o \in \{\sft,\,\sfd\}}
\end{equation}
with $\Psi_{b}(\cdot)$ given by (\ref{eq:Psi_def}) and
\begin{multline}  \label{eq:Psip_def}
\Psi'_{b}(V_{o})\triangleq \frac{\partial \Psi_{b}(V_{o})}{\partial V_{o}}\\
=  \frac{ e^{-\frac{(\sqrt{2}r_{b}-V_{o})^2}{2(\sigma_{w}^2 + c_{H}c_{X_{o}}-q_{H}q_{X_{o}})}} -e^{-\frac{(\sqrt{2}r_{b-1}-V_{o})^2}{2(\sigma_{w}^2 + c_{H}c_{X_{o}}-q_{H}q_{X_{o}})}}}{\sqrt{2 \pi (\sigma_{w}^2 + c_{H}c_{X_{o}}-q_{H}q_{X_{o}})}}.
\end{multline}
\end{Proposition}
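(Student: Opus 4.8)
The plan is to evaluate (\ref{eq:LimF}) by the replica method, computing $\Ex_{\wtqY}\{\sfP^{\tau}(\wtqY)\}$ for integer $\tau$ and then analytically continuing to $\tau\to 0$. First I would write $\sfP^{\tau}(\wtqY)$ as a $\tau$-fold product of copies of the partition function (\ref{eq:marginalPr}), each carrying its own replica variables $\{\qH^{(a)},\qX^{(a)}\}_{a=1}^{\tau}$; together with the genuine signal $(\qH^{(0)},\qX^{(0)})$ that generates $\wtqY$, this yields $\tau+1$ interacting replicas. Taking the expectation over $\wtqY$ couples the replicas only through the output likelihood $\sfP_{\sf out}(\wtY_{nt}|Z_{nt}^{(a)})$, so the whole object depends on the replicas exclusively through the matrix products $Z_{nt}^{(a)} = \frac{1}{\sqrt{K}}\sum_k H_{nk}^{(a)} X_{kt}^{(a)}$, and the two priors factorize over the individual channel and symbol entries.

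The central step is a change of variables to the replica overlaps. Because each $Z_{nt}^{(a)}$ is a normalized sum of $K$ terms, a central-limit argument makes the collection $\{Z_{nt}^{(a)}\}_{a=0}^{\tau}$ jointly complex Gaussian at each site $(n,t)$, with a covariance given \emph{at the bilinear level} by the product of a per-row channel overlap $q_{H}^{(ab)} = \frac{1}{K}\sum_k H_{nk}^{(a)}(H_{nk}^{(b)})^{*}$ and a per-column symbol overlap $q_{X_o}^{(ab)} = \frac{1}{K}\sum_k X_{kt}^{(a)}(X_{kt}^{(b)})^{*}$, grouped by phase $o\in\{\sft,\sfd\}$. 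I would insert both families of overlaps via Dirac deltas, represent these by their conjugate-parameter (Fourier) integrals, and thereby factorize the output sites $(n,t)$ from the channel sites $(n,k)$ and symbol sites $(k,t)$. In the large-system limit the resulting overlap integrals are evaluated by the saddle-point method, reducing $\calF$ to an extremization over the overlaps and their conjugate parameters.

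To obtain a closed form I would impose the replica-symmetric ansatz, setting all off-diagonal overlaps equal to scalars $q_{H}, q_{X_o}$ and the diagonals to $c_{H}=\sigma_{h}^2$, $c_{X_o}=\sigma_{x_o}^2$, and similarly collapsing the conjugate parameters to $\tq_{H}, \tq_{X_o}$. Under this ansatz the output-site factor produces the $\Psi_b$ term of (\ref{eq:FreeEntropyFinal}) --- the Gaussian field seen by the quantizer has exactly the fluctuation variance $\sigma_{w}^2 + c_{H}c_{X_o} - q_{H}q_{X_o}$ of (\ref{eq:Psi_def}) --- while the channel- and symbol-site factors collapse to single-letter partition functions of the effective AWGN channels (\ref{eq:scalCh}), giving the mutual-information terms $I(H;Y_{H}|\tq_{H})$ and $I(X_{\sfd};Y_{X_{\sfd}}|\tq_{X_{\sfd}})$; the Legendre coupling between overlaps and their conjugates supplies the bilinear terms $\alpha(c_{H}-q_{H})\tq_{H} + \sum_{o}\beta_{o}(c_{X_o}-q_{X_o})\tq_{X_o}$. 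Differentiating in $\tau$ and letting $\tau\to 0$ removes the replica index, and the stationarity conditions of the extremization become precisely the fixed-point equations (\ref{eq:fxiedPoints}), with $\chi_o$ emerging as the Fisher information (susceptibility) of the quantized channel in (\ref{eq:chi_def}); the known pilots force $q_{X_{\sft}}=c_{X_{\sft}}$, i.e.\ $\mse_{X_{\sft}}=0$, so the $\sft$-phase drops out of the data-MSE fixed point.

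The hard part is two-fold. Analytically, the bilinear structure $\qZ = \frac{1}{\sqrt{K}}\qH\qX$ means the output-channel covariance is a \emph{product} of an $H$-overlap and an $X$-overlap, so the Gaussian-field variance and the conjugate-parameter bookkeeping are markedly more delicate than in a linear single-matrix model; one must track carefully which overlap multiplies which, and keep the $\sft$- and $\sfd$-phase contributions separate throughout the saddle-point reduction. Conceptually, the replica method is not rigorous: the interchange of the $\tau\to 0$ limit with the large-$K$ limit, the analytic continuation from integer to real $\tau$, and the restriction of the saddle point to the replica-symmetric subspace are all assumptions rather than theorems, which is why the result is stated as a Proposition to be corroborated by simulation rather than as a proved Theorem.
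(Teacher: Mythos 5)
Your proposal follows essentially the same route as the paper's Appendix~B: replicate the partition function with $\tau+1$ replicas, insert the per-row channel overlaps and per-column symbol overlaps via Dirac deltas with Fourier conjugate parameters, invoke the central limit theorem so that the replicated $Z_{nt}^{(a)}$ become jointly Gaussian with covariance equal to the \emph{product} of the two overlaps, evaluate by saddle point under the replica-symmetric ansatz, decouple the prior terms into single-letter AWGN channels (the paper does this via the Hubbard--Stratonovich transformation, which is the concrete device behind your ``collapse to single-letter partition functions''), and read off the fixed-point equations from stationarity as $\tau\to 0$. The structure of the final expression --- the $\Psi_b$ output term with variance $\sigma_w^2+c_Hc_{X_o}-q_Hq_{X_o}$, the mutual-information terms, and the Legendre bilinear terms --- matches the paper's derivation, so the proposal is correct and not a genuinely different argument.
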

\begin{proof}
See Appendix B.
\end{proof}

As previously mentioned, the asymptotic MSEs of $\qX_{\sfd}$ and $\qH$ are the saddle points of the free entropy. Clearly, from Proposition \ref{Pro_FreeEntropy},
they are $\mse_{X_{\sfd}}$ and $\mse_{H}$, respectively. Notably, the MSEs are associated with the scalar AWGN channels (\ref{eq:scalCh_X}) and (\ref{eq:scalCh_H}). Therefore, the performances of the quantized MIMO system seem to be fully characterized by the scalar AWGN channels (\ref{eq:scalCh}). The following proposition formulates such intuition.

\begin{Proposition} \label{Pro_Decoupling}
Let $X_{\sfd,kt}$, $H_{nk}$, $\whX_{\sfd,kt}$, and $\whH_{nk}$ denote the ${(k,t)}$-th and ${(n,k)}$-th entries of $\qX_{\sfd}$, $\qH$, $\whqX_{\sfd}$, and $\whqH$, respectively. As $K \to \infty$, the joint distribution of $(X_{\sfd,kt},H_{nk},\whX_{\sfd,kt},\whH_{nk})$ of channels (\ref{eq:qsys}), (\ref{eq:estH}), and (\ref{eq:estX}) converges to the joint distribution of $(X_{\sfd},H,\whX_{\sfd},\whH)$ of the scalar channels (\ref{eq:scalCh_X}) and (\ref{eq:scalCh_X}).
\end{Proposition}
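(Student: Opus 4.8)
The plan is to extract the limiting joint law of $(X_{\sfd,kt},H_{nk},\whX_{\sfd,kt},\whH_{nk})$ from the \emph{same} replica computation that produces Proposition~\ref{Pro_FreeEntropy}, by enriching the partition function with source terms and reading the distribution off by differentiation. Since convergence in distribution can be certified through the joint moments, I would first write a generic joint moment $\Ex\{(X_{\sfd,kt})^{m}(\whX_{\sfd,kt})^{m'}\}$ as an average of the planted site value together with a product of posterior-mean factors. The technical device is the replica representation of the normalizing constant $\sfP(\wtqY)$: introducing $\tau$ replicas $\{X^{(a)}_{\sfd,kt}\}_{a=1}^{\tau}$ that are i.i.d.\ draws from the posterior, and letting replica index $0$ carry the planted signal that actually generates $\wtqY$, the true value is identified with $X^{(0)}_{\sfd,kt}$ while $\whX_{\sfd,kt}$ becomes the $\tau\to0$ limit of the single-replica average $\langle X^{(1)}_{\sfd,kt}\rangle$. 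Joint moments of $(X_{\sfd,kt},\whX_{\sfd,kt})$ then correspond to inserting $X^{(0)}$ together with a product of distinct replica variables and taking $\tau\to0$.

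Second, I would push these insertions through the same chain of manipulations used in Appendix~B: combining the prior and likelihood over all sites, decoupling the quadratic overlap terms by Hubbard--Stratonovich auxiliary variables, and evaluating the resulting integral by the saddle-point method as $K\to\infty$. The key fact I would rely on is that, after this reduction, the $(\tau+1)$-replica problem factorizes over sites and the effective single-site measure for $\{X^{(a)}_{\sfd,kt}\}_{a=0}^{\tau}$ couples the replicas only through one common Gaussian field $v$ whose strength is set by the order parameter $q_{H}q_{X_{\sfd}}$. Under the replica-symmetric ansatz this effective measure is \emph{exactly} the one generated by the scalar channel~(\ref{eq:scalCh_X}) at SNR $\tq_{X_{\sfd}}$: the common field plays the role of $Y_{X_{\sfd}}$, the planted replica $X^{(0)}$ is the channel input drawn from $\sfP_{X_{\sfd}}$, and replicas $1,\dots,\tau$ are i.i.d.\ samples from the scalar posterior $\sfP(X_{\sfd}\,|\,Y_{X_{\sfd}})$. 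Letting $\tau\to0$, the single-replica average collapses to the scalar posterior mean~(\ref{eq:est_XH}), so the joint moments of $(X_{\sfd,kt},\whX_{\sfd,kt})$ coincide with those of $(X_{\sfd},\whX_{\sfd})$; the identical argument with field strength $\tq_{H}$ handles $(H_{nk},\whH_{nk})$.

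Finally, because the site-decoupling in the previous step is simultaneous and independent across the data index $(k,t)$ and the channel index $(n,k)$, the effective single-site measures for the $X$-block and the $H$-block factorize, so the limiting law of the four-tuple is the product of the two independent scalar-channel laws~(\ref{eq:scalCh_X}) and~(\ref{eq:scalCh_H}), which is precisely the assertion of the proposition.

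I expect the main obstacle to be the justification of the factorized single-site measure in the second step, and in particular the treatment of the planted replica~$0$. One must show that inserting the source/test replicas does not move the saddle point, so that the order parameters $\{q_{H},q_{X_{o}},\tq_{H},\tq_{X_{o}}\}$ remain exactly those fixed by~(\ref{eq:fxiedPoints}) and the perturbation merely selects the distribution. The delicate point is tracking the overlap between replica~$0$ and a generic replica: it is the Bayes-optimal (Nishimori) structure that forces this planted--sample overlap to equal the sample--sample overlap $q_{X_{\sfd}}$, making replica~$0$ statistically interchangeable with the others and pinning the effective SNR at the common value $\tq_{X_{\sfd}}$. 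This reliance on replica symmetry and the Nishimori identity is exactly where the argument remains heuristic rather than rigorous.
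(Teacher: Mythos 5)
Your proposal is correct and follows essentially the same route as the paper's Appendix C: joint moments are generated by inserting source terms $e^{\varepsilon_h f_h \varepsilon_x f_x}$ into the replicated partition function, with replica $0$ carrying the planted signal and distinct replicas representing powers of the posterior mean, then pushing the insertion through the Appendix B saddle-point calculation (which it does not perturb) and reading off the factorized scalar-channel moments, concluding by moment determinacy (the paper invokes Carleman's theorem here). The only presentational difference is that the paper tracks real and imaginary parts with separate replica indices and does not name the Nishimori identity explicitly, but the underlying argument is the same.
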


\begin{proof}
See Appendix C.
\end{proof}

Proposition \ref{Pro_Decoupling} shows that, in the large-system limit, the input output of the quantized MIMO system employing the Bayes-optimal JCD estimator is equivalently decoupled into a bank of the scalar AWGN channels (\ref{eq:scalCh_X}) and (\ref{eq:scalCh_H}). This characteristic is known as the decoupling principle, which was introduced by \cite{Guo-05IT} for treading an \emph{unquantized} MIMO system with \emph{perfect} CSIR. If perfect CSIR is available, then we will not need (\ref{eq:scalCh_H}) for treating the channel estimation quality. Notably, Proposition \ref{Pro_Decoupling} extends the decoupling principle to a general setting. In particular, we employ the JCD estimator so that the decoupled AWGN channels involve not only the data symbol [i.e, (\ref{eq:scalCh_X})] but also the channel response [i.e., (\ref{eq:scalCh_H})] as well.

\begin{Remark}
The equivalent channels (\ref{eq:scalCh}) are the scalar AWGN channels, with $\tq_{\sfd}$ and $\tq_{H}$ being the equivalent SNRs. As shown in (\ref{eq:fxiedPoints}) and (\ref{eq:chi_def}), the quantization effect is included in $\tq_{\sfd}$ and $\tq_{H}$ through $\chi_{o}$ for $o \in \{ \sft,\,\sfd\}$. Consider the extreme case of $\sfB \rightarrow \infty$ and $\Delta \rightarrow 0$, i.e., the unquantized channel. In this case, the Riemann sums $\sum_{b=1}^{2^{\sfB}}$ in (\ref{eq:chi_def}) becomes the Riemann integral over the interval $(-\infty, \infty)$. Applying the technique in {\rm Remark}~3 to (\ref{eq:chi_def}) and evaluating the integrals, $\chi_{o}$ can be simplified to
\begin{equation} \label{eq:chi_UnQ}
    \chi_{o} = \frac{1}{ \sigma_{w}^2 + c_{H}c_{X_{o}}-q_{H}q_{X_{o}} }.
\end{equation}
Substituting (\ref{eq:fxiedPoints}) for $q_{H}$ and $q_{X_{o}}$ in the denominator of (\ref{eq:chi_UnQ}), we obtain $\sigma_{w}^2 + c_{H}\mse_{X_{o}}+(c_{X_{o}} - \mse_{X_{o}})\mse_{H}$. The quantity in this form can be understood as the noise plus the residual interference resulting from the estimation errors of the data symbol and channel response.
\end{Remark}

We focus our interest on several special cases in the following examples to obtain more insight into Proposition
\ref{Pro_Decoupling}.

{\noindent {\bf Example~3} (Constellation-like Inputs).} Based on Proposition \ref{Pro_Decoupling}, the asymptotic MSEs w.r.t.~$\qX_{\sfd}$ and $\qH$ can be determined by using the MSEs of the scalar AWGN channels (\ref{eq:scalCh_X}) and (\ref{eq:scalCh_H}), respectively. Thus, if the data symbol is drawn from a quadrature phase-shift keying (QPSK) constellation, then we will derive
\begin{align}
    \mse_{X_{\sfd}} &= 1-\int{\rmD z \tanh{\left(\tq_{X_{\sfd}}+\sqrt{\tq_{X_{\sfd}}}z\right)}},
    \label{eq:mseQPSK_X} \\
    \mse_{H} &= \frac{\sigma_{h}^2}{1+\sigma_{h}^2\tq_{H}}. \label{eq:mseQPSK_H}
\end{align}
The SER w.r.t.~$\qX_{\sfd}$ can also be evaluated through the scalar AWGN channel (\ref{eq:scalCh_X}), which is given by \cite[p.269]{Proakis-95}
\begin{equation} \label{eq:SER}
    {\sf SER} =  2{\cal Q}{\left(\sqrt{\tq_{X_{\sfd}}}\right)} - \left[{\cal Q}{\left(\sqrt{\tq_{X_{\sfd}}}\right)}\right]^2,
\end{equation}
where ${\cal Q}(x) = \int_{x}^{\infty}\rmD z$ is the Q-function.

In fact, all these performances w.r.t.~$\qX_{\sfd}$ can be determined on the basis of knowledge of the scalar AWGN channel with SNR $\tq_X$. Thus, if the data symbol is drawn from other square QAM constellations, then the corresponding SER can be easily obtained by using the closed-form SER expression in \cite[p.279]{Proakis-95}. 

{\noindent {\bf Example~4} (Perfect CSIR).} If the channel matrix $\qH$ is perfectly known, then the $\sft$-phase will not be required so that
\begin{equation} \label{eq:beta_PCSI}
 \beta_{\sft} = 0 ~~~\mbox{and}~~~ \beta_{\sfd} = \beta.
\end{equation}
Given that $\qH$ is perfectly known, $\mse_{H} =0$. Integrating this into (\ref{eq:asyVarH}), we immediately obtain ${q_{H} = c_{H} = \sigma_{h}^2}$, which yields
\begin{align}
    q_{H}q_{X_{\sfd}} &= c_{H} q_{X_{\sfd}}, \label{eq:qq_PCSI} \\
    {c_{H}c_{X_{\sfd}}-q_{H}q_{X_{\sfd}}} &= {c_{H} \mse_{X_{\sfd}}}, \label{eq:ccpqq_PCSI}
\end{align}
in which (\ref{eq:ccpqq_PCSI}) follows the result of ${c_{H}c_{X_{\sfd}}-q_{H}q_{X_{\sfd}}}  = {c_{H} (c_{X_{\sfd}}-q_{X_{\sfd}})}$ and (\ref{eq:asyVarXd}). Substituting (\ref{eq:beta_PCSI})--(\ref{eq:ccpqq_PCSI}) into (\ref{eq:Psi_def}), (\ref{eq:chi_def}) and (\ref{eq:Psip_def}), we derive more concise expressions for $\chi_{\sfd}$, $\Psi_{\sfb}(\cdot)$, and $\Psi'_{\sfb}(\cdot)$. Notably, when particularizing our results to the case with the QPSK inputs, we recover the same asymptotic MSE expression as given in \cite[(7) and (8)]{Nakamura-08ISITA}. More precisely, in \cite{Nakamura-08ISITA}, the real-valued system with BPSK signal was considered. In such case, $\sqrt{2}r_{b}$ in our study should be replaced with $r_{b}$.

{\noindent {\bf Example~5} (Pilot-only Scheme).} In the conventional pilot-only scheme, the receiver solely uses $\wtqY_{\sft}$ and $\qX_{\sft}$ to generate an estimate of $\qH$ and subsequently uses the estimated channel for estimating the data $\qX_{\sfd}$ from $\wtqY_{\sfd}$ \cite{Risi-14ArXiv}. The analysis of the asymptotic MSE w.r.t.~$\qH$ is the same as that in Example~4, but the roles of $\qH$ and $\qX_{\sft}$ are exchanged. In particular, during the $\sft$-phase, we derive $\beta_{\sfd} = 0$ and $\mse_{X_{\sft}} = 0$ because no data symbol is involved and the pilot matrix $\qX_{\sft}$ is known. After substituting these parameters into (\ref{eq:fxiedPoints}) and simplification, we obtain the \emph{self-contained} fixed-point equations
\begin{align}
    &\mse_{H} = \frac{\sigma_{h}^2}{1+\sigma_{h}^2\tq_{H}}, \label{eq:mse_H_PilotOnly}\\
    &\tq_{H}  = \beta_{\sft} \sigma_{x_{\sft}}^2 \chi_{\sft} \label{eq:q_H_PilotOnly}
\end{align}
with
\begin{equation} \label{eq:chi_PilotOnly}
 \chi_{\sft} =
 \sum_{b=1}^{2^{\sfB}} \int\!\rmD v \frac{\Big(\Psi'_{b}\left(\sqrt{\sigma_{x_{\sft}}^2(\sigma_{h}^2-\mse_{H})} v \right)\Big)^2}{\Psi_{b}\left(
 \sqrt{\sigma_{x_{\sft}}^2(\sigma_{h}^2-\mse_{H}) } v \right)}.
\end{equation}
In (\ref{eq:mse_H_PilotOnly}), $\mse_{H}$ represents the asymptotic MSE w.r.t.~$\qH$ for the pilot-only scheme, which is also the MSE w.r.t. $H$ for the scalar AWGN channel (\ref{eq:scalCh_H}). Notably, $\tq_{H}$ serves as the SNR of the AWGN channel (\ref{eq:scalCh_H}). Comparing $\tq_{H}$ in (\ref{eq:q_H_PilotOnly}) with that in (\ref{eq:asyVarH}), we determine that the second term of $\tq_{H}$ in (\ref{eq:asyVarH}) is the gain achieved by data-aided channel estimation.

\begin{table*}[t]
\caption{$C_{\sfB}$ for uniform $\sfB$-bit quantizer with $\Delta = \sqrt{\sfB}2^{-\sfB}$. } 
\centering 
\begin{tabular}{c|ccccccc}
\hline
  $\sfB$ & $1$ & $2$ & $3$ & $4$ & $5$ & $6$ & $7$\\
\hline \hline
  $C_{\sfB}$ (in dB) & $2.8731$ & $-5.9852$ & $-13.0201$ & $-19.4804$ & $-25.7065$ & $-31.8265$ & $-37.6547$ \\
\hline
\end{tabular}
\label{table:Coefficient_CB}
\end{table*}

\begin{figure}
\begin{center}
\resizebox{3in}{!}{%
\includegraphics*{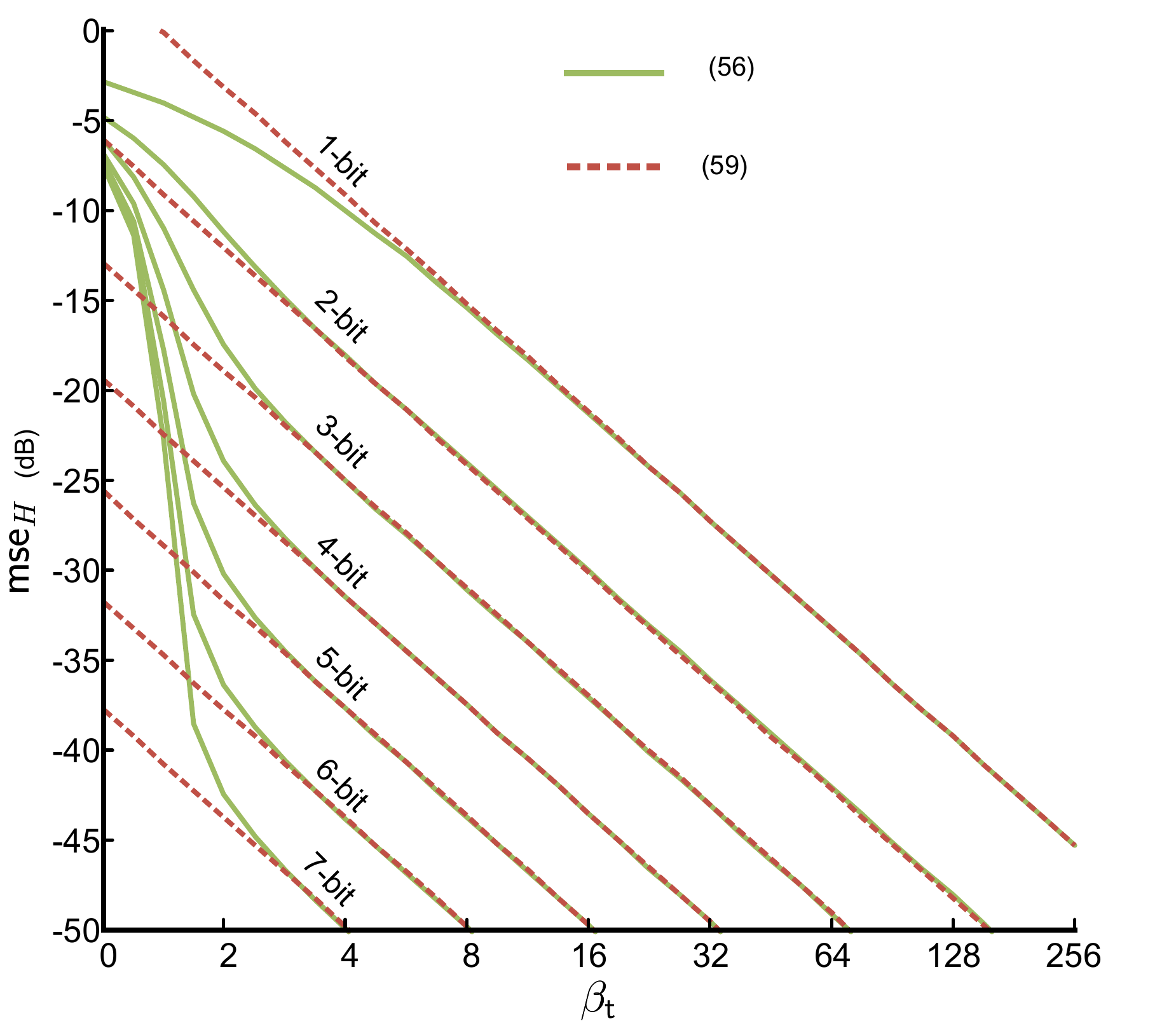} }%
\caption{The asymptotic MSE w.r.t.~$\qH$ of the pilot-only scheme versus the pilot ratio $\beta_{\sft} = T_{\sft}/K$ for different $\sfB$-bit quantizer. 
}\label{fig:MSEH_PilotOnly}
\end{center}
\end{figure}

Before proceeding with the analysis of estimated data in the pilot-only scheme, we provide the following proposition to obtain a better understanding of $\mse_{H}$ in (\ref{eq:mse_H_PilotOnly}).

\begin{Proposition} \label{pro:mseOfH_PilotOnly}
Let the channel gain and the transmit pilot power be normalized, that is, ${\sigma_{h}^2 = 1}$ and ${\sigma_{x_{\sft}}^2 = 1}$. In the high-SNR regime and $\beta_{\sft} = T_{\sft}/K \gg 1$, $\mse_{H}$ of the pilot-only scheme can be approximately expressed as
\begin{equation} \label{eq:mse_HighSNR}
 \mse_{H} \approx {-20\log_{10}(\beta_{\sft}) + C_{\sfB}}~({\rm dB}),
\end{equation}
where $C_{\sfB}$ is a quantizer-dependent (e.g., $\Delta$ and $\sfB$) constant.
\end{Proposition}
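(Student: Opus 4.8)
The plan is to work directly from the self-contained fixed-point equations (\ref{eq:mse_H_PilotOnly})--(\ref{eq:chi_PilotOnly}) of the pilot-only scheme. Under the normalization $\sigma_h^2 = \sigma_{x_\sft}^2 = 1$ the coupling between $\mse_H$ and $\tq_H$ enters only through $\chi_\sft$, so the whole problem reduces to tracking the growth of $\chi_\sft$ as the effective smoothing width $\sigma \triangleq \sqrt{\sigma_w^2 + \mse_H}$ shrinks (note that here $\sigma_w^2 + c_H c_{X_\sft} - q_H q_{X_\sft} = \sigma_w^2 + \mse_H$). In the high-SNR regime $\sigma_w^2 \to 0$ together with $\beta_\sft \gg 1$ we expect $\mse_H \to 0$, hence $\tq_H \to \infty$, so (\ref{eq:mse_H_PilotOnly}) linearizes to $\mse_H \approx 1/\tq_H = 1/(\beta_\sft \chi_\sft)$. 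Everything then hinges on the leading-order behavior of $\chi_\sft$ in $\sigma$.

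The key step is a boundary-layer (Laplace-type) analysis of $\chi_\sft$. First I would note that the integrand $\sum_b (\Psi_b')^2/\Psi_b$ is precisely the Fisher information of the quantized output about the location parameter $V_\sft = \sqrt{q_H q_{X_\sft}}\, v$, with $\sigma$ the width inside each $\Psi_b$. For $V_\sft$ in the interior of a bin the relevant bin probability is $\approx 1$ while its derivative is exponentially small in $1/\sigma^2$, so the interior contributes nothing at leading order; the mass concentrates in $O(\sigma)$-wide layers around each finite threshold $\sqrt{2}\,r_b$, $b = 1,\ldots,2^\sfB-1$. Near such a threshold I would zoom in with $V_\sft = \sqrt{2}\,r_b - \sigma u$. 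Keeping only the two bins straddling the threshold, $\Psi_b \to \Phi(u)$ and $\Psi_{b+1}\to \Phi(-u)$ with $|\Psi_b'| = |\Psi_{b+1}'| \to \phi(u)/\sigma$, so the two contributions combine into $\frac{1}{\sigma^2}\phi^2(u)\big(\Phi(u)^{-1}+\Phi(-u)^{-1}\big) = \frac{1}{\sigma^2}\phi^2(u)/\big(\Phi(u)\Phi(-u)\big)$ using $\Phi(u)+\Phi(-u)=1$. Since $\sqrt{q_H q_{X_\sft}}\to 1$, the Gaussian weight is $\rmD v \to \phi(\sqrt{2}\,r_b)\,\sigma\,\rmd u$ in the layer, so integrating out $u$ assigns to each threshold the contribution $\frac{\phi(\sqrt{2}\,r_b)}{\sigma}\,I_0$, where $I_0 \triangleq \int_{-\infty}^{\infty} \phi^2(u)/\big(\Phi(u)\Phi(-u)\big)\,\rmd u$ is a finite universal constant (its integrand decays like $|u|\phi(u)$). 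Summing over thresholds yields
\[
\chi_\sft \approx \frac{C}{\sigma}, \qquad C \triangleq I_0 \sum_{b=1}^{2^\sfB-1} \phi(\sqrt{2}\,r_b).
\]

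To close the argument, in the high-SNR regime $\sigma_w^2 \to 0$ gives $\sigma \approx \sqrt{\mse_H}$, so the self-consistency becomes $\chi_\sft \approx C/\sqrt{\mse_H}$. Substituting into $\mse_H \approx 1/(\beta_\sft\chi_\sft)$ produces $\mse_H \approx \sqrt{\mse_H}/(C\beta_\sft)$, i.e. $\sqrt{\mse_H}\approx 1/(C\beta_\sft)$, hence $\mse_H \approx 1/(C^2\beta_\sft^2)$. Reading this off in decibels, $10\log_{10}(\mse_H) \approx -20\log_{10}(\beta_\sft) - 20\log_{10}(C)$, which is exactly (\ref{eq:mse_HighSNR}) with the quantizer-dependent constant $C_\sfB = -20\log_{10}(C) = -20\log_{10}\!\big(I_0\sum_{b=1}^{2^\sfB-1}\phi(\sqrt{2}\,r_b)\big)$; for $\sfB=1$ the single threshold $r_1=0$ leaves only $\phi(0)$, which fixes $C_1$.

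The main obstacle is the boundary-layer asymptotics of $\chi_\sft$: one must argue (in the physics style of the paper) that interior bin values contribute nothing at leading order, that only the two adjacent bins matter at each threshold, and that the $O(\sigma)$ rescaling produces the clean, threshold-independent constant $I_0$. A secondary subtlety is the order of limits --- one must send $\sigma_w^2 \to 0$ first so that $\sigma \approx \sqrt{\mse_H}$ closes the fixed point; retaining a finite $\sigma_w^2$ would keep it under the square root and destroy the clean $-20\log_{10}(\beta_\sft)$ slope, confining the approximation to the range $\mse_H \gg \sigma_w^2$.
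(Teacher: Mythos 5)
Your argument is correct and follows the same skeleton as the paper's Appendix D: set $\sigma_{w}^2=0$, linearize (\ref{eq:mse_H_PilotOnly}) to $\mse_{H}\approx 1/\tq_{H}$, show $\chi_{\sft}\approx C/\sqrt{\mse_{H}}$, and close the fixed point to get $\mse_{H}\approx(C\beta_{\sft})^{-2}$, i.e.\ the $-20\log_{10}(\beta_{\sft})$ slope. The genuine difference is in how the constant is extracted. The paper substitutes $u=(\sqrt{2}r_{b}-\sqrt{1-\mse_{H}}\,v)/\sqrt{\mse_{H}}$ globally in each bin-$b$ term to obtain the exact pre-limit expression (\ref{eq:CB_Val}) and then sends $\mse_{H}\to 0$, whereas you do a local boundary-layer blow-up around each threshold and merge the two adjacent bins into $\phi^2(u)/\bigl(\Phi(u)\Phi(-u)\bigr)$. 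The two routes must agree, and yours is in fact the more careful one: in the limit of (\ref{eq:CB_Val}) each bin contributes both an upper-threshold layer ($z=O(1)$, weight $e^{-r_{b}^2}$) and a lower-threshold layer ($z-L_{b}=O(1)$ with $L_{b}=\sqrt{2}(r_{b}-r_{b-1})/\sqrt{\mse_{H}}$, weight $e^{-r_{b-1}^2}$), so each finite threshold is counted twice; this is precisely your identity $1/\bigl(\Phi(u)\Phi(-u)\bigr)=1/\Phi(u)+1/\Phi(-u)$. The paper's \emph{displayed} limiting formula for $c_{\sfB}$ retains only the upper-threshold term and is therefore short by a factor of $2$ (i.e.\ $6$\,dB in $C_{\sfB}$); your constant $C=I_0\sum_{b}\phi(\sqrt{2}r_{b})$ with $I_0=2\int\phi^2(z)/\Phi(z)\,\rmd z$ is the one consistent with Table \ref{table:Coefficient_CB}, which the paper computes from the exact (\ref{eq:CB_Val}) (for $\sfB=1$, $C=2\phi(0)\int\phi^2(z)/\Phi(z)\,\rmd z\approx 0.72$ gives $C_{1}\approx 2.87$\,dB, matching the table, while the displayed formula would give about $8.87$\,dB). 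Your remark on the order of limits also matches the paper, which likewise sets $\sigma_{w}^2=0$ at the outset.
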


\begin{proof}
See Appendix D.
\end{proof}

As an example, Table \ref{table:Coefficient_CB} provides the corresponding value of $C_{\sfB}$ for a uniform $\sfB$-bit quantizer with ${\Delta = \sqrt{\sfB}2^{-\sfB}}$. In this case, we plot the MSEs that use the approximate expression (\ref{eq:mse_HighSNR}) as well as its analytical form (\ref{eq:mse_H_PilotOnly}) in Figure~\ref{fig:MSEH_PilotOnly}. We observe that, for ${\beta_{\sft} > 2}$, the approximation (\ref{eq:mse_HighSNR}) matches the theoretical result (\ref{eq:mse_H_PilotOnly}) perfectly. Table \ref{table:Coefficient_CB} shows that the constant $C_{\sfB}$ satisfies ${C_{\sfB} \approx -6.02\sfB + 4.4895}$ in high-resolution cases, indicating that $\mse_{H}$ decreases by $6$\,dB for each 1 bit increase of the rate. Notably, this property coincides with the well-known figure of merit in quantization.\footnote{The property of $6$\,dB improvement in signal-to-quantization-noise ratio for each extra bit is a well-known figure of merit in the ADC literature \cite[p.248]{Oppenheim-09BOOK}.} From (\ref{eq:mse_HighSNR}), given a fixed quantizer (i.e., fixed $C_{\sfB}$), $\mse_{H}$ increases by $6$\,dB for each doubling of training length $\beta_{\sft}$. Consequently, doubling the length for training has the same effect as increasing an extra bit on every ADC at the massive MIMO receiver.

The proceeding observation provides a useful guideline for the trade-off between the training length and the ADC word length. For instance, if we target
$\beta_{\sft}$ to that attained by the pilot-only scheme at $\mse_{H} = -30$dB, the 4 bit receiver requires $\beta_{\sft} = 4$, as shown in
Figure~\ref{fig:MSEH_PilotOnly}. If we intend to reduce the ADC word length of each ADC to $1$ bit, then the training length increases $2^{4-1} = 8$ times compared
with that in the 4 bit case. This argument shows the importance of the JCD technique in the \emph{quantized} MIMO system. With the JCD technique, the estimated
payload data are utilized to aid channel estimation so that the \emph{effective} training length virtually increases.

Then, we return to the analysis of estimated data. If the channel estimate is subsequently used to estimate data via the Bayes-optimal approach, then we can derive the corresponding \emph{self-contained} fixed-point equations for the $\sfd$-phase similar to (\ref{eq:mse_H_PilotOnly}) and (\ref{eq:q_H_PilotOnly}). In particular, we calculate (\ref{eq:asyVarXd}) given a \emph{fixed} ${q_{H} = \sigma_{h}^2 - \mse_{H}}$, with $\mse_{H}$ given in (\ref{eq:mse_H_PilotOnly}). Given that no iteration process occurred between the pilots and data symbols, (\ref{eq:asyVarH}) and (\ref{eq:asyVarXt}) are not involved in the $\sfd$-phase. If the JCD technique is employed, then $\mse_{H}$ can be further reduced. Any reduction in channel estimation error $\mse_{H}$ results in an increase in $q_{H}$; thus, $\tq_{X_{\sfd}} = \alpha q_{H} \chi_{\sfd}$ increases.

\section*{ V. Discussions and Numerical Results}

\begin{figure}
\begin{center}
\resizebox{3.75in}{!}{%
\includegraphics*{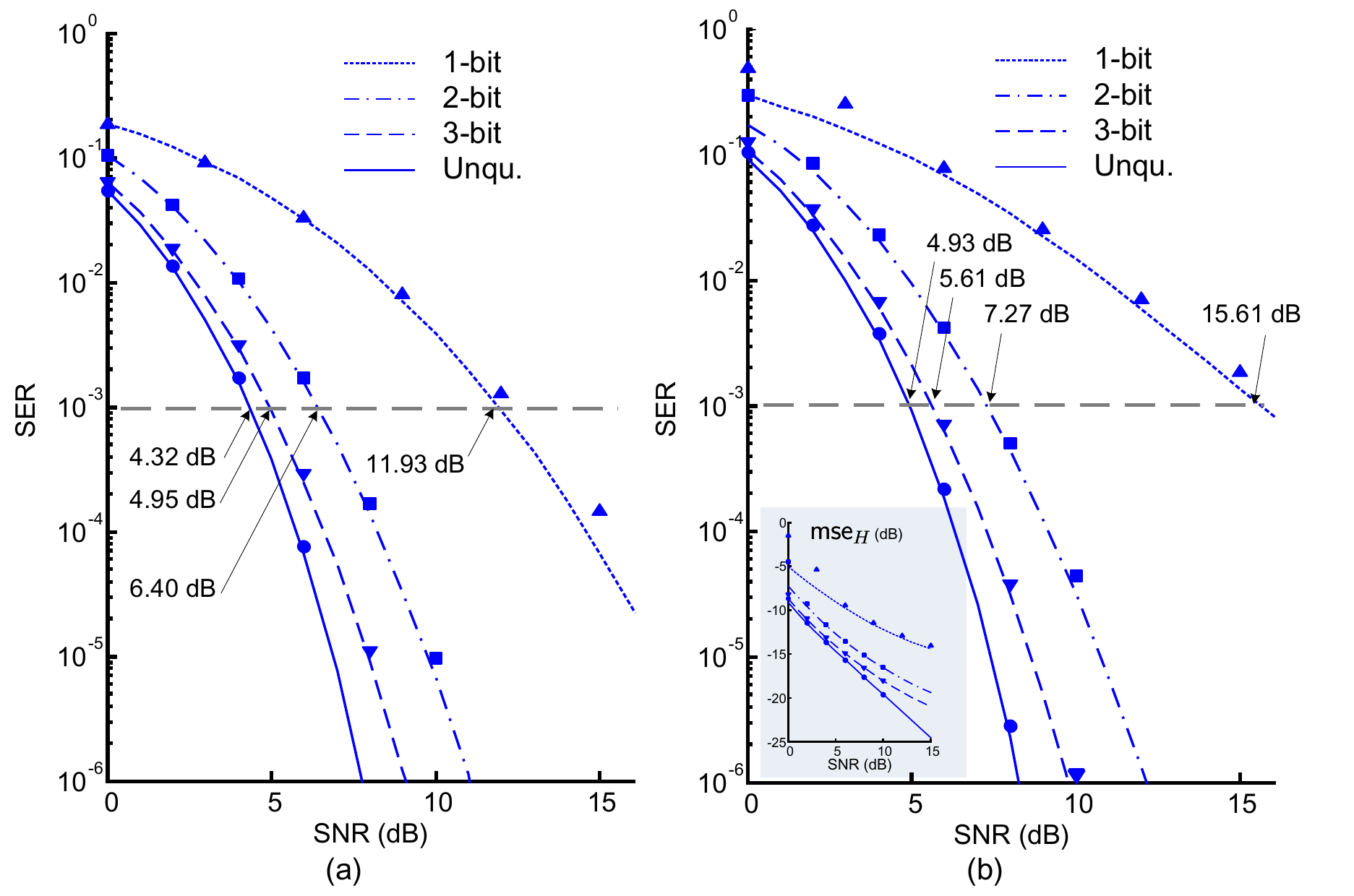} }%
\caption{SER versus SNR for QPSK constellations. In the results, the JCD estimation scheme is used under the settings with a) perfect CSIR and b) no CSIR. Curves denote analytical results and markers denote Monte-Carlo simulation results achieved by the GAMP-based JCD algorithm. The MSEs  w.r.t.~$\qH$ of the JCD estimator are plotted as a subfigure.}\label{fig:SER_QPSK}
\end{center}
\end{figure}

\subsection*{A. Accuracy of the Analytical Results}
Computer simulations are conducted to verify the accuracy of our analytical results. In particular, we compare the SER expression (\ref{eq:SER}) and the analytical
MSE w.r.t.~$\qH$ (\ref{eq:mseQPSK_H}) with those obtained by the simulations. The simulation results are obtained by averaging over $10,000$ channel realizations,
wherein the GAMP-based JCD algorithm (Algorithm 1) is implemented with tolerance $\epsilon  = 10^{-8}$ and the maximum number of iterations
$\xi_{\max}=100$.\footnote{ We do not show the convergence of the GAMP-based JCD algorithm because of space limitation. In most cases, the GAMP-based JCD algorithm
converges after only $20$--$30$ iterations although it shows a slow convergence at low SNRs.} The system parameters are set as follows: ${K=50}$, ${N=200}$,
${T_\sft = 50}$, and ${T_\sfd = 450}$. The SNR of the system is defined as ${\rm SNR}=1/\sigma_{w}^2$. The pilot matrix $\qX_{\sft} \in \bbC^{K \times T_{\sft}}$
consists of statistically independent QPSK constellations. In the simulations, we use the typical uniform quantizer with a fixed quantization step size ${\Delta =
1/2}$. Notably, this quantization step size is nonoptimal. The optimal step size will be discussed in the subsequent subsection. As the QPSK constellations are
used to transmit data, Figure~\ref{fig:SER_QPSK} shows the corresponding SER results for the cases of (a) perfect CSIR and (b) no CSIR. The corresponding MSE
w.r.t.~$\qH$ of the JCD estimator are plotted as a subfigure in Figure~\ref{fig:SER_QPSK}(b). We observe that the GAMP-based JCD algorithm can generally describe
the performances of the \emph{theoretical} Bayes-optimal estimator. The performance of the theoretical Bayes-optimal estimator can also be described by our
analytical expressions. Notably, the GAMP-based JCD algorithm is only an approximation of the Bayes-optimal JCD estimator whose implementation is prohibitive. For
the case with no CSIR, the GAMP-based JCD algorithm cannot work as well as that predicted by the analytical result at low SNRs. At low SNRs, the GAMP-based JCD
algorithm shows a slow convergence, such that the adopted maximum number of iterations is insufficient.\footnote{At low SNRs, we obtain a good result by increasing
the maximum number of iterations.} This gap motivated the search for other improved estimators in the future.

Figure~\ref{fig:SER_QPSK}(b) shows that performance degradation due to low-precision quantization is small. For instance, if we target the SNR to that attained by the unquantized system at SER$=10^{-3}$, then the $3$ bit Bayes-optimal JCD estimator only incurs a loss of ${5.61-4.32=1.29}$\,dB. Even with $2$-bit quantization, the loss of ${7.27-4.32=2.95}$\,dB remains acceptable.

\subsection*{B. Optimal Step-Size}

\begin{figure}
\begin{center}
\resizebox{2.8in}{!}{%
\includegraphics*{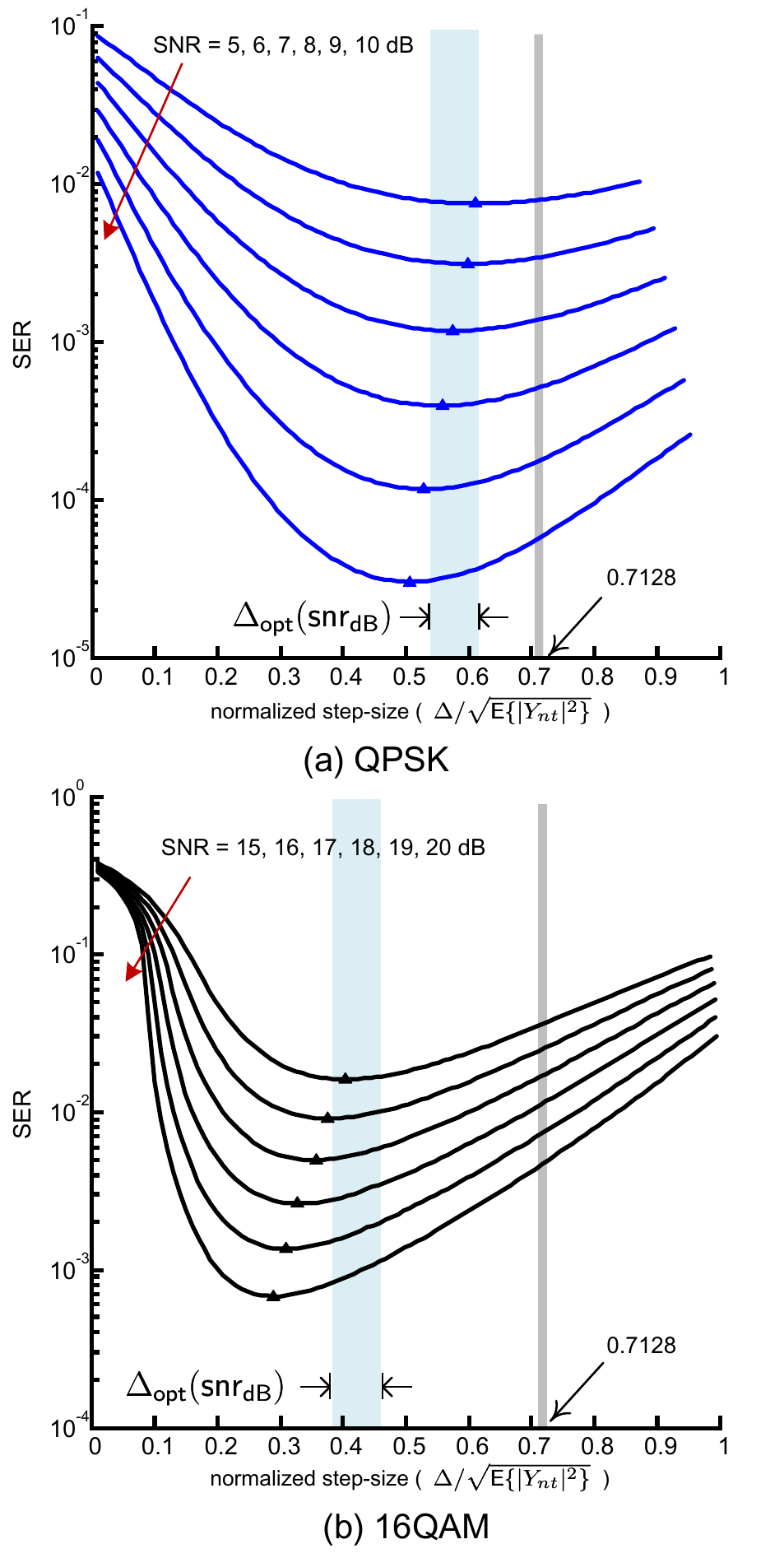} }%
\caption{SERs versus the normalized step size under the quantized MIMO system with a) QPSK and b) 16QAM constellations for $\alpha = 4$, $\beta = 10$, $\beta_{\sft} = 1$. Markers correspond to the lowest SER w.r.t.~the normalized step size. The optimal step size determined by minimizing the distortion of a Gaussian signal \cite{Dhahir-TSP96}, i.e., $\Delta= 0.7128$, is plotted as the vertical axis.}\label{fig:StepSize_QPSK16QAM_SER}
\end{center}
\end{figure}

\begin{figure}
\begin{center}
\resizebox{2.8in}{!}{%
\includegraphics*{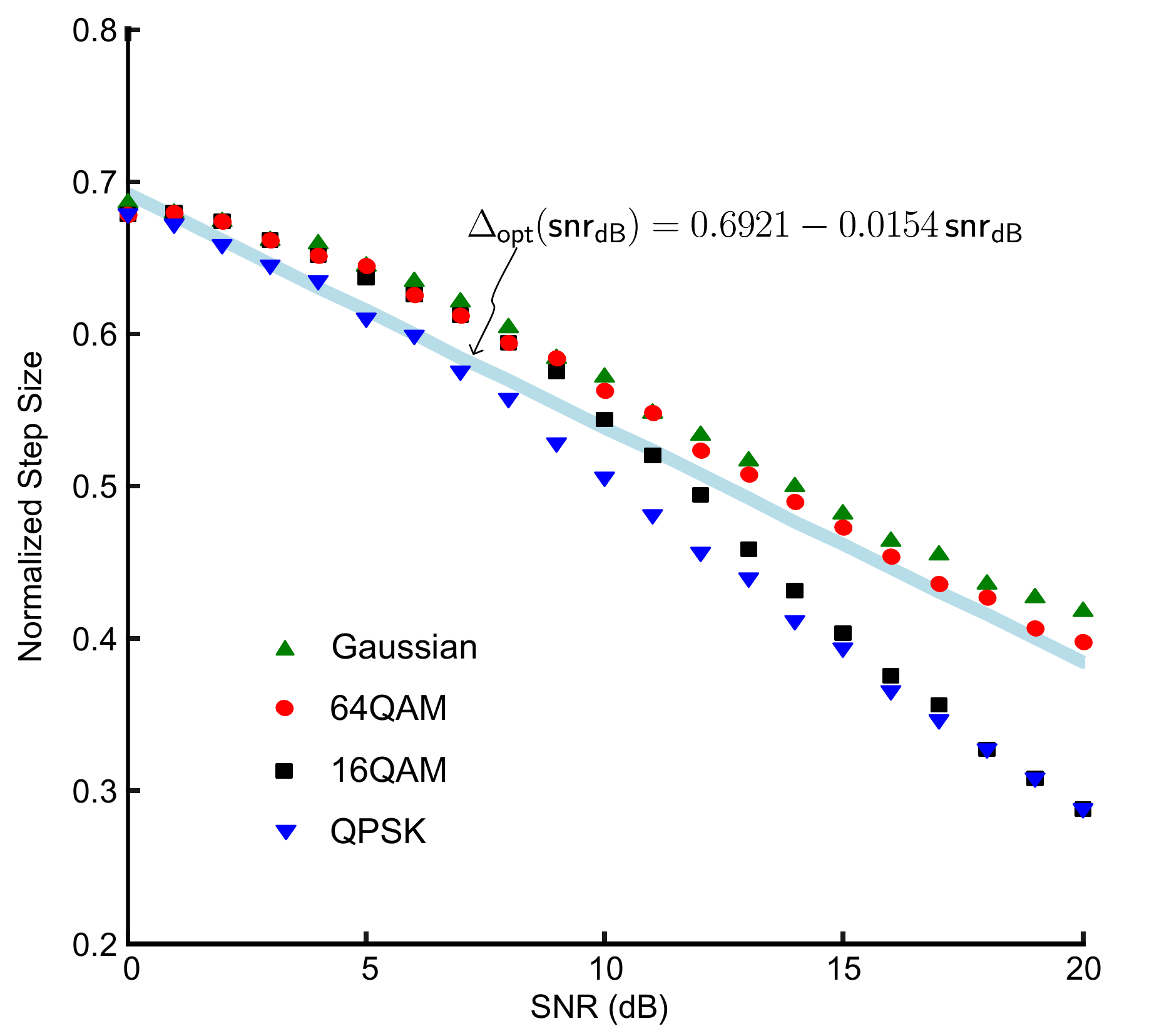} }%
\caption{The optimal step size (normalized by $\Ex\{ |Y_{nt}|^2 \}$).}\label{fig:OptStepSize}
\end{center}
\end{figure}

In the 1 bit ADC (i.e., $\sfB =1$), the quantization output is assigned the value $\frac{\Delta}{2}$ if the input is a positive number and $-\frac{\Delta}{2}$ otherwise. For the Bayes-optimal estimator, the performances are \emph{irrelevant} to any particular value of $\Delta$.\footnote{This property is invalid for other estimators, such as linear estimators \cite{Zhang-15Arxiv}. } This property can be easily achieved by reviewing the likelihood in (\ref{eq:lnkelihood_each}), wherein ${r_{b} = \{ -\infty, \, 0, \infty \}}$ for $b=0,1,2$. Notably, $\Delta$ is not involved at the beginning of data estimation. Therefore, we shall focus on cases with $\sfB > 1$.

Notably, $Y_{nt}$ is the input signal to the quantizer. Direct application of the central limit theorem results in that $Y_{nt}$ can be approximated as a Gaussian distribution with variance $\Ex\{ |Y_{nt}|^2 \} = 1 + \sigma_{w}^2$. For a Gaussian signal with unit variance, the optimal step size that can be used to minimize quantization distortion is computed in \cite{Dhahir-TSP96} and is ${1.008/\sqrt{2} \approx 0.7128}$ if $\sfB=2$.\footnote{The optimal step size \cite{Dhahir-TSP96} is divided by $\sqrt{2}$ in this study because the signal power of the real or imaginary part is $1/\sqrt{2}$.} Under the same setting, that is, $\alpha = N/K = 4$, $\beta = T/K = 10$, $\beta_{\sft} = T_{\sft}/K = 1$, Figure \ref{fig:StepSize_QPSK16QAM_SER} shows the SERs of the Bayes-optimal estimator as a function of the \emph{normalized} step size $\Delta/\sqrt{\Ex\{ |Y_{nt}|^2 \}}$ for $\sfB=2$. We observed that the step size optimized in terms of the SER for the Bayes-optimal estimator is quite different from that for minimizing its distortion.

Figure \ref{fig:OptStepSize} shows the optimal step sizes for different input signals $\qX_{\sfd}$, including QPSK, 16QAM, 64AM, and Gaussian inputs. The optimal step size varies slightly for different input signals and decreases with the increase in SNR. We observe from other simulations that the optimal step size varies only slightly for different settings of $\alpha$ and $\beta$. Thus, we conclude that the optimal step size for the Bayes-optimal estimator is mainly dominated by the SNR.

\begin{table}[t]
\caption{The coefficients $a_0$ and $a_1$ of $\Delta_{\sf opt}({\sf snr}_{\sf dB})$ for $\sfB=2,3,4$. } 
\centering 
\begin{tabular}{lcc}
\hline
  $\sfB$ \hspace{0.5cm} & $a_0$ & $a_1$ \\
\hline \hline
  $2$ \hspace{0.5cm} & $0.6921$ & $-0.0154$ \\
\hline
  $3$ \hspace{0.5cm} & $0.4364$ & $-0.0118$\\
\hline
  $4$ \hspace{0.5cm} & $0.2559$ & $-0.0071$\\
\hline
\end{tabular}
\label{table:CoefficientsOfDelta}
\end{table}

We fit the optimal step sizes for different input signals by using a first-degree polynomial equation
\begin{equation}
\Delta_{\sf opt}({\sf snr}_{\sf dB}) = a_0 + a_1 {\rm snr}_{\sf dB},
\end{equation}
where ${\sf snr}_{\sf dB}$ represents the SNR in dB scale to derive a general expression.
The (least squares fit) coefficients $a_0,a_1$ are listed in Table \ref{table:CoefficientsOfDelta}. The optimal step sizes determined by using $\Delta_{\sf opt}$ are also indicated by the shaded area in Figure \ref{fig:StepSize_QPSK16QAM_SER}. We observe that, although $\Delta_{\sf opt}$ is nonoptimal for each specific input, their corresponding performances remain acceptable. Following the same argument, we derive the corresponding polynomial equation $\Delta_{\sf opt}({\sf snr}_{\sf dB})$ for different quantization bits, with their coefficients listed in Table \ref{table:CoefficientsOfDelta}.

\subsection*{C. Effects due to the Absence of CSIR}

Comparing Figures \ref{fig:SER_QPSK}(a) and \ref{fig:SER_QPSK}(b), we observe that the loss due to no CSIR is small for the Bayes-optimal JCD estimator. Then, we discuss the performances of the Bayes-optimal JCD estimator \emph{with} and \emph{without} the perfect CSIR under various system settings to obtain a better understanding on the effects of CSIR over the quantized MIMO system. In contrast to the QPSK signals used in previous simulations, we focus on the Gaussian inputs, that is, $X_{\sfd} \sim \calN_{\bbC}(0,1)$, in the subsequent experiments. The other system parameters are the same as  that in the previous experiment, that is, ${\alpha = N/K = 4}$, ${\beta = T/K = 10}$, and ${\beta_{\sft} = T_{\sft}/K = 1}$. Figure~\ref{fig:MSEX_QvsUnQ} shows the asymptotic MSE $\mse_{X_{\sfd}}$ for the Bayes-optimal JCD estimator with and without perfect CSIR. The MSE for the pilot-only scheme is also shown. Notably, the Bayes-optimal JCD estimator shows a significant improvement over the pilot-only scheme in the $1$~bit and unquantized cases. The gap between the Bayes-optimal JCD estimator with and without perfect CSIR is small in the unquantized case, whereas the gap is large in the case of the $1$~bit quantizer. By observing the $1$~bit and unquantized cases, we can expect that the gap decreases with the increase in the ADC resolution.

\begin{figure}
\begin{center}
\resizebox{2.8in}{!}{%
\includegraphics*{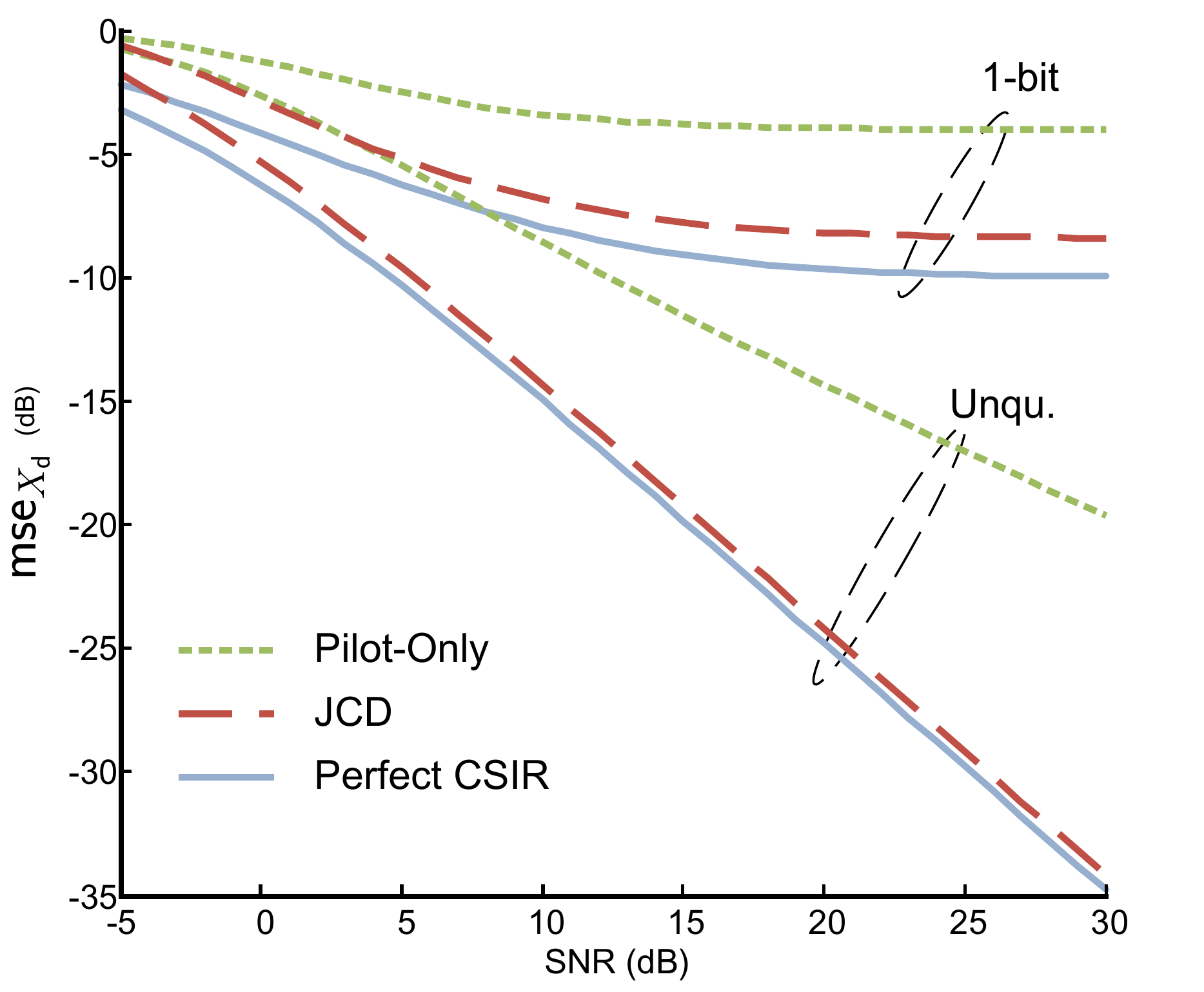} }%
\caption{$\mse_{X_{\sfd}}$ versus SNR for the pilot-only scheme and the Bayes-optimal JCD estimator with and without perfect CSIR under the $1$-bit quantization and unquantized receivers. $\alpha = N/K = 4$, $\beta = T/K = 10$, $\beta_{\sft} = T_{\sft}/K = 1$, and $X_{\sfd,kt} \sim \calN_{\bbC}(0,1)$.}\label{fig:MSEX_QvsUnQ}
\end{center}
\end{figure}

\begin{figure}
\begin{center}
\resizebox{2.8in}{!}{%
\includegraphics*{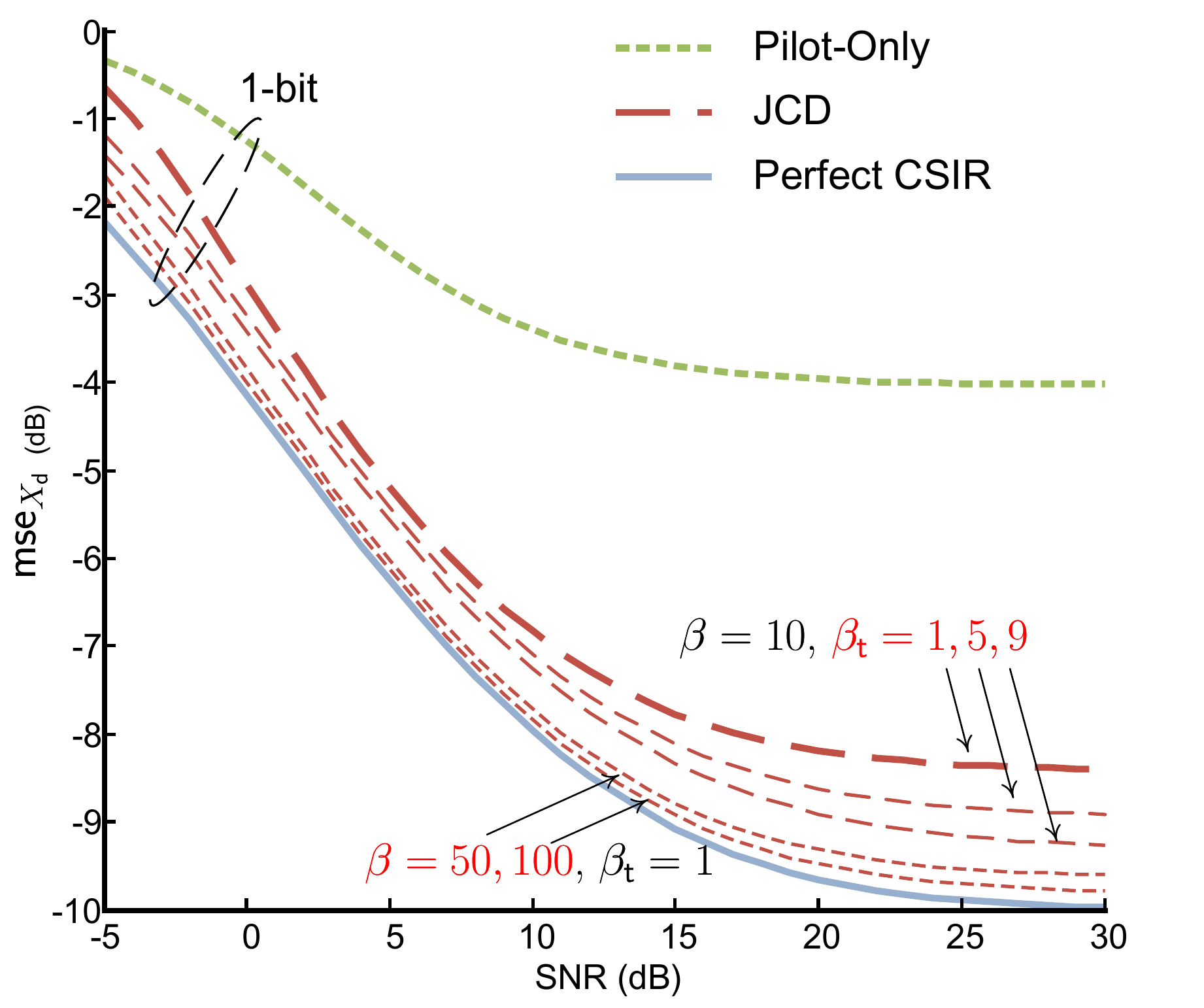} }%
\caption{$\mse_{X_{\sfd}}$ versus SNR for the Bayes-optimal JCD estimator with $1$-bit receivers under various setting of $\beta$ and $\beta_{\sft}$. $\alpha= 4$ and $X_{\sfd,kt} \sim \calN_{\bbC}(0,1)$.}\label{fig:MSEX_All_JCD}
\end{center}
\end{figure}

A straightforward method to reduce the gap of the $1$~bit case is increasing the training length. We provide the MSE results for $\beta_{\sft} = 5$ and $\beta_{\sft} = 9$ in Figure~\ref{fig:MSEX_All_JCD} to verify this argument. However, the improvement achieved by increasing the training length is limited even if $\beta_{\sft} =9$, leaving only $\beta_{\sfd} =1$ for data. Alternatively, we may consider a larger $\beta$, but the maximum $\beta$ is limited by the coherence time. If $\beta = 100$ and $\beta_{\sft} =1$, then the performance of the Bayes-optimal JCD estimator without perfect CSIR is approximately similar to the (fundamental limit) case with perfect CSIR.   Nevertheless, such a long coherence time is usually unavailable in practice.

\section*{VI. Conclusion}
We developed a framework for studying the \emph{best} possible estimation performance of the quantized MIMO system, namely, the massive MIMO system with very low-precision ADCs. In particular, we used the Bayes-optimal inference for the JCD estimation and achieve this estimation by applying the BiG-AMP technique. The asymptotic performances (e.g., MSEs and SERs) w.r.t.~the channels and the payload data were derived and shown as simply characterized by scalar AWGN channels. Monte-Carlo simulations were conducted to demonstrate the accuracy of our analytical results.

The high accuracy of the analytical expressions enable us to quickly and efficiently assess the performance of the quantized MIMO system. Thus, we obtained the following useful guidelines for the system design:
\begin{itemize}
\item We showed that the asymptotic MSE of the channel estimate in the conventional pilot-only scheme decreases by approximately $6$\,dB for each bit added to the ADCs or each doubling of training length. This finding supports the importance of the JCD technique, especially in the \emph{quantized} MIMO system.

\item The optimal step size for minimizing SERs of the Bayes-optimal estimator were shown to be highly different from that for minimizing the distortion of a Gaussian signal and, fortunately, can be quickly determined by our analytical expressions.

\item The Bayes-optimal estimator already exhibits the best possible estimation performance. Even so, we showed that the performance gap between the Bayes-optimal JCD estimator with and without perfect CSIR still cannot be negligible in the \emph{quantized} MIMO system. We also discussed the ways to reduce the gap and then concluded that achieving the same performance as the full CSIR case in the quantized MIMO system is very difficult.
\end{itemize}

Many potential directions for future work are available. The GAMP-based JCD algorithm presented in this paper is a first step toward achieving the \emph{optimal} JCD estimate under the quantized MIMO system. The computational complexity of the GAMP-based JCD algorithm may still be too high to be affordable in a commercial system. One possible solution is to adopt other suboptimal schemes such as linear estimators. Another feasible solution is using mixed-ADC receiver architecture \cite{Liang-15ArXiv} wherein a small number of high-resolution ADCs is available. Thus, CSIR gains high accuracy and facilitates the JCD procedure. For a development in this direction, see \cite{Zhang-15Arxiv}.

\section*{ Appendix A: Derivations of (\ref{eq:hatZ_RealGaussian}) and (\ref{eq:mseZ_RealGaussian})}
In this appendix, we derive the expressions (\ref{eq:hatZ_RealGaussian}) and (\ref{eq:mseZ_RealGaussian}), by applying the techniques in \cite[Chapter
3.9]{Rasmussen-06BOOK}. The derivations below are only dedicated for the real part of the estimator because the imaginary part of the estimator can be
obtained analogously. Note that the signal power and noise power are $v^{p}/2$ and $\sigma_{w}^2/2$, respectively, per real and imaginary part. For ease of
notation, we have abused $\wtY$, $Z$, and $\hp$ to denote ${\rm Re}(\wtY)$, ${\rm Re}(Z)$, and ${\rm Re}(\hp)$, respectively.

To get (\ref{eq:hatZ_RealGaussian}), we begin by deriving the denominator of (\ref{eq:margPost_z}). First, recall from (\ref{eq:lnkelihood_each}) that if $\wtY \in (r_{b-1},r_{b}]$ and $\wtY \leq 0$, the likelihood is given by
\begin{align}
    \sfP_{\sf out}( \wtY | Z )
    &= \Phi{\left( \frac{r_b-Z}{\sqrt{\sigma_{w}^2/2}}\right)} - \Phi{\left( \frac{r_{b-1}-Z}{\sqrt{\sigma_{w}^2/2}}\right)}.\label{eq:Pout_App}
\end{align}
Note that for the special case $b = 1$, we have $r_0 = -\infty$, and the second term of (\ref{eq:Pout_App}) will disappear. Substituting (\ref{eq:Pout_App}) into
the denominator of (\ref{eq:margPost_z}), it can be shown that\footnote{The calculation can be done by using the Gaussian reproduction property given by
footnote 5.}
\begin{multline} \label{eq:normConst}
\int {\sfP_{\sf out}( \wtY | z )} \calN(z; \hp, v^{p}/2)\,\rmd z\\
 = \Phi{\left( \frac{\sign(\wtY)\hp-|r_b|}{\sqrt{(\sigma_{w}^2 + v^{p})/2}}\right)} - \Phi{\left( \frac{\sign(\wtY)\hp-|r_{b-1|}}{\sqrt{(\sigma_{w}^2 + v^{p})/2}}\right)} \triangleq C.
\end{multline}
Differentiating w.r.t.~$\hp$ on both sides of (\ref{eq:normConst}) yields
\begin{multline}  \label{eq:difNormConst}
\int{\left( \frac{z-\hp}{v^{p}/2} \right)} {\sfP_{\sf out}( \wtY | z )} \calN(z; \hp, v^{p}/2)\,\rmd z\\
 =\frac{\sign(\wtY)}{\sqrt{(\sigma_{w}^2 + v^{p})/2}} \Bigg( \phi{\left( \frac{\sign(\wtY)\hp-|r_b|}{\sqrt{(\sigma_{w}^2 + v^{p})/2}} \right)}\\
- \phi{\left( \frac{\sign(\wtY)\hp-|r_{b-1}|}{\sqrt{(\sigma_{w}^2 + v^{p})/2}}\right)} \Bigg),
\end{multline}
where we have used ${\partial \Phi(x)/\partial \hp = \phi(x) \partial x /\partial \hp}$.
Using (\ref{eq:normConst}), (\ref{eq:difNormConst}) can be rearranged as
\begin{multline}  \label{eq:zPoutN}
  \int{z\,\sfP_{\sf out}( \wtY | z )} \calN(z; \hp, v^{p}/2) \,\rmd z\\
  = \hp C + \frac{\sign(\wtY)v^{p}}{\sqrt{2(\sigma_{w}^2 + v^{p})}} \left( {\phi(\eta_1)} - {\phi(\eta_2)} \right),
\end{multline}
where $\eta_2$ and $\eta_1$ are given by (\ref{eq:eta_def}). Multiplying both sizes by $1/C$, we obtain  the marginal posterior mean given in (\ref{eq:hatZ_RealGaussian}).

Similarly, (\ref{eq:mseZ_RealGaussian}) can be calculated by differentiating (\ref{eq:normConst}) twice as
\begin{align}  \label{eq:dif2NormConst}
&\int{\left( \frac{z^2}{(v^{p}/2)^2} - \frac{2\hp z}{(v^{p}/2)^2} + \frac{\hp^2}{(v^{p}/2)^2}
   - \frac{1}{v^{p}/2} \right)} \nonumber \\
& \hspace{3.5cm} \times {\sfP_{\sf out}( \wtY | z )} \calN(z; \hp, v^{p}/2)\,\rmd z \nonumber \\
= &\, \frac{-1}{(\sigma_{w}^2 + v^{p})/2} \Big( \eta_1 {\phi(\eta_1)}  - \eta_2 {\phi(\eta_2)} \Big),
\end{align}
which then can be rearranged as
\begin{multline}  \label{eq:dif2NormConst2}
  \Ex{\left\{ Z^2 \big|\,\hp, v^{p}/2\right\}} = 2\hp \,\Ex{\left\{  Z  \big|\,\hp, v^{p}/2\right\}} + \left({v^{p}/2}-\hp^2\right)\\
 - \frac{1}{C} \frac{(v^{p})^2 }{2(\sigma_{w}^2 + v^{p})} \Big( \eta_1
 {\phi(\eta_1)}  - \eta_2 {\phi(\eta_2)} \Big).
\end{multline}
We also note that
\begin{equation} \label{eq:varZ_def}
\Varx{\left\{ Z \big|\,\hp, v^{p}/2\right\}}  = \Ex{\left\{ Z^2 \big|\,\hp, v^{p}/2\right\}}  - \left(\Ex{\left\{  Z  \big|\,\hp, v^{p}/2\right\}} \right)^2.
\end{equation}
After plugging (\ref{eq:dif2NormConst2}) and (\ref{eq:hatZ_RealGaussian}) into (\ref{eq:varZ_def}), we obtain (\ref{eq:mseZ_RealGaussian}). In the above derivations, we have assumed $\wtY \leq 0$. For $\wtY > 0$, the above derivations can be used in the same way.

\section*{ Appendix B: Proof of Proposition \ref{Pro_FreeEntropy}}
Using the replica method, we first compute the replicate partition function $\Ex_{\wtqY}\left\{\sfP^{\tau}(\qY)\right\}$ in (\ref{eq:LimF}), which with the definition of (\ref{eq:posteriorPr}) can be expressed as
\begin{equation} \label{eq:sf_E1}
    \Ex_{\wtqY}{\left\{\sfP^{\tau}(\wtqY)\right\}}
    = \Ex_{\calqH,\calqX}{\left\{\int\!\rmd \wtqY
    \prod_{a=0}^{\tau} {\sfP_{\sf out}\left(\wtqY \left|\qZ^{(a)}\right.\right)} \right\}},
\end{equation}
where we define ${\qZ^{(a)} \triangleq \qH^{(a)} \qX^{(a)}/\sqrt{K}}$ with $\qH^{(a)}$ and $\qX^{(a)}$ being the $a$-th replica of $\qH$ and $\qX$, respectively, and $\calqX \triangleq \{ \qX^{(a)}, \forall a \}$ and $\calqH \triangleq \{ \qH^{(a)}, \forall a \}$. Here, $(\qH^{(a)},\qX^{(a)})$ are random matrices taken from the distribution $(\sfP_{\sfH},\sfP_{\sfX})$ for $a=0,1, \dots, \tau$. In addition, $\int\!\rmd \wtqY$ denotes the integral w.r.t.~a discrete measure because the quantized output $\wtqY$ is a finite set. We will calculate the right-hand side of (\ref{eq:sf_E1}), by applying the techniques in \cite{Krzakala-13ISIT,Kabashima-14ArXiv,Wen-15ICC} after additional manipulations.

To average over $(\calqH,\calqX)$, we introduce two $(\tau+1)$$\times(\tau+1)$ matrices $\qQ_{H} = [Q_{H}^{ab}] $ and $\qQ_{X_{o}} =[Q_{X_{o}}^{ab}] $ for $o \in \{ \sft,\sfd\}$ whose elements are defined by $Q_{H}^{ab} = \qh_{n}^{(b)} (\qh_{n}^{(a)})^{\dag}/K $ and $Q_{X_{o}}^{ab} = (\qx_{t}^{(a)})^{\dag}\qx_{t}^{(b)}/K$ for $t \in \calT_{o}$, where, $\qh_{n}^{(a)}$ is the $n$th row vector of $\qH^{(a)}$, and $\qx_{t}^{(a)}$ is the $t$th column vector of $\qX^{(a)}$ for $t \in\calT_{\sft}$ or $\calT_{\sfd}$. The definitions of $\qQ_{H}$ and $\qQ_{X_{\sfd}}$ are equivalent to
\begin{align*} 
1 &=  \int \prod_{n=1}^{N} \prod_{0\leq a \leq b}^{\tau}\delta{\left(
 \qh_{n}^{(b)} (\qh_{n}^{(a)})^{\dag} - K Q_{H}^{ab} \right)} \rmd Q_{H}^{ab}, \\
1 &= \int \prod_{o \in \{ \sft,\sfd\}}\prod_{t \in \calT_o}\prod_{0\leq a \leq b}^{\tau}\delta{\left(
 (\qx_{t}^{(a)})^{\dag} \qx_{t}^{(b)} - K Q_{X_{o}}^{ab} \right)}  \rmd Q_{X_{o}}^{ab}, 
\end{align*}
where $\delta(\cdot)$ denotes Dirac's delta. Let $\calqQ_X \triangleq \{ \qQ_{X_o}, \forall o \}$ and $\calqZ  \triangleq \{ \qZ^{(a)}, \forall a\}$. Inserting the
above into (\ref{eq:sf_E1}) yields
\begin{equation}\label{eq:sf_E2}
 \Ex_{\wtqY}\{\sfP^{\tau}(\wtqY)\} = {\int e^{K^2{\cal G}^{(\tau)}}\rmd\mu_H^{(\tau)}(\qQ_H) \rmd\mu_X^{(\tau)}(\calqQ_X)},
\end{equation}
where
\begin{align*} 
{\cal G}^{(\tau)}(\calqQ_Z)&\triangleq \frac{1}{K^2} \log \Ex_{\calqZ }{\left\{  \int\!\rmd\wtqY\prod_{a=0}^{\tau} {\sfP_{\sf out}\left(\wtqY \left| \qZ^{(a)} \right.\right)}  \right\}}, \\
\mu_H^{(\tau)}(\qQ_H) &\triangleq\Ex_{\calqH}{\left\{\prod_{n=1}^{N} \prod_{0\leq a \leq b}^{\tau} \delta{\left(
 \qh_{n}^{(b)} ( \qh_{n}^{(a)} )^{\dag} - K Q_{H}^{ab} \right)}\right\}},\\
\mu_X^{(\tau)}(\calqQ_X) &\triangleq\Ex_{\calqX}{\left\{ \prod_{o,t \in \calT_o}\prod_{0\leq a \leq b}^{\tau} \delta{\left(
 (\qx_{t}^{(a)})^{\dag} \qx_{t}^{(b)} - K Q_{X_{o}}^{ab} \right)}\right\}}.
\end{align*}
Using the Fourier representation of the $\delta$ function via auxiliary matrices ${\tilde{\qQ}_{H} = [\tilde{\qQ}_{H}^{ab}]} \in \bbC^{(\tau+1)\times(\tau+1)}$, $\tilde{\calqQ}_X \triangleq\{ \tilde{\qQ}_{X_{o}} = [\tilde{\qQ}_{X_{o}}^{ab}] \in \bbC^{(\tau+1)\times(\tau+1)}, \forall o \} $ and performing the saddle point method for the integration over $(\qQ_H,\calqQ_X)$, we attain
\begin{equation} \label{eq:sf_E3}
    \frac{1}{K^2}\Ex_{\wtqY}\{\sfP^{\tau}(\wtqY)\} = \Extr_{\qQ_H,\calqQ_X,\tilde{\qQ}_H,\tilde{\calqQ}_X} \Big\{ \calF^{(\tau)} \Big\}
\end{equation}
with
\begin{subequations} \label{eq:saddlePoint}
\begin{align}
& \calF^{(\tau)} \triangleq \nonumber \\
& \frac{1}{K^2} \log \Ex_{\qZ}\left\{  \prod_{n,o,t \in \calT_o}  \int\!\rmd \wtY_{nt} \prod_{a} {\sfP_{\sf out}\left(\wtY_{nt} \left| Z_{nt}^{(a)} \right.\right)}  \right\}  \label{eq:saddlePoint1} \\
& +\frac{1}{K^2} \log \calM_{H}^{(\tau)}(\qQ_{H})  -\alpha {\tr\left(\tilde\qQ_{H}\qQ_{H}\right)} \label{eq:saddlePoint2} \\
& +\frac{1}{K^2}  \log  \calM_{X}^{(\tau)}(\tilde{\calqQ}_X) -{\sum_{o}\beta_o
 \tr\left(\tilde\qQ_{X_{o}}\qQ_{X_{o}}\right)}, \label{eq:saddlePoint3}
\end{align}
\end{subequations}
where $\Extr_{x}\{ f(x) \}$ denotes the extreme value of $f(x)$ w.r.t.~$x$;
\begin{align*}
 &\calM_{H}^{(\tau)}(\tilde{\qQ}_{H}) \triangleq \Ex_{\calqH}\left\{\prod_{n=1}^{N} e^{\tr\left(\tilde\qQ_{H}\qH_{n}^H\qH_{n}\right)}\right\}, \\
 &\calM_{X}^{(\tau)}(\tilde{\calqQ}_X) \triangleq \Ex_{\calqX}\left\{\prod_{o \in \{ \sft,\sfd\}}e^{\tr\left(\tilde\qQ_{X_{o}}\qX_{o}^H \qX_{o}\right)}\right\},
\end{align*}
$\qH_{n}^{H} \triangleq [\qh_{n}^{(0)T}\,\qh_{n}^{(1)T}\cdots \qh_{n}^{(\tau)T} ]^{T}$, $\qX_{o} \triangleq [\qx_{o}^{(0)}\,\qx_{o}^{(1)}\cdots \qx_{o}^{(\tau)}
]$. According to (\ref{eq:LimF}), the average free entropy turns out to be
\begin{equation*}
\calF = \lim_{\tau\rightarrow 0}\frac{\partial}{\partial \tau}\Extr_{\qQ_H,\calqQ_X,\tilde{\qQ}_H,\tilde{\calqQ}_X} \Big\{ \calF^{(\tau)} \Big\}.
\end{equation*}

The saddle points of $\calF^{(\tau)}$ can be found by the point of zero gradient w.r.t.~$\{\qQ_{H},\qQ_{X_{o}},\tilde{\qQ}_{H},\tilde{\qQ}_{X_{o}}\}$ but it is still prohibitive to get explicit expressions about the saddle points. Thus, we assume that the saddle points follow the RS form \cite{Kabashima-14ArXiv} as
\begin{subequations}
\begin{align}
    &\qQ_{H} = (c_{H}-q_{H})\qI + q_{H}\qone\qone^T, \label{eq:RS1} \\
    &\tilde\qQ_{H} =  (\tc_{H}-\tq_{H})\qI + \tq_{H}\qone\qone^T, \\
    &\qQ_{X_{o}} = (c_{X_{o}}-q_{X_{o}})\qI + q_{X_{o}}\qone\qone^T, \label{eq:RS3} \\
    &\tilde\qQ_{X_{o}} = (\tc_{X_{o}}-\tq_{X_{o}})\qI + \tq_{X_{o}}\qone\qone^T.
\end{align}
\end{subequations}
In addition, the application of the central limit theorem suggests that $\qz_{nt} \triangleq [ Z_{nt}^{(0)} \, Z_{nt}^{(1)} \cdots Z_{nt}^{(\tau)} ]^T$ are Gaussian random vectors with ${(\tau+1)\times(\tau+1)}$ covariance matrix $\qQ_{Z_t}$. If $t \in \calT_o$, then the $(a,b)$th entry of $\qQ_{Z_o}$ is given by
\begin{equation} \label{eq:defQ}
     (Z_{nt}^{(a)})^* Z_{nt}^{(b)}
     = Q_{H}^{ab} Q_{X_{o}}^{ab} \triangleq Q_{Z_o}^{ab} .
\end{equation}
As such, we set $\qQ_{Z_o} = (c_{H}c_{X_{o}}-q_{H}q_{X_{o}})\qI + q_{H}q_{X_{o}}\qone $, which is equivalent to introducing to the Gaussian random variable $\qz_{nt}$ for $t \in \calT_o$ as
\begin{equation} \label{eq:eqz}
Z_{nt}^{(a)} = \sqrt{ c_{H}c_{X_{o}}-q_{H}q_{X_{o}} } \,u_c^{(a)} + \sqrt{ q_{H}q_{X_{o}} } v_c,~\mbox{for }a=0,\dots \tau,
\end{equation}
where $u_c^{(a)}$ and $v_c$ are independent standard complex Gaussian random variables. With RS, the problem of seeking the extremum w.r.t.~$\{\qQ_{H},\qQ_{X_{o}},\tilde{\qQ}_{H},\tilde{\qQ}_{X_{o}}\}$ is reduced to seeking the extremum over $\{c_{H},q_{H},c_{{X_{o}}},q_{{X_{o}}},\tc_{H},\tq_{H},\tc_{{X_{o}}},\tq_{{X_{o}}} \}$, which can be obtained by equating the corresponding partial derivatives of the RS expression $\calF^{(\tau)}$ to zero.

To this end, we calculate the RS expression of $\calF^{(\tau)}$ by substituting these RS expressions into (\ref{eq:saddlePoint1})--(\ref{eq:saddlePoint3}). First, for (\ref{eq:saddlePoint1}), we substitute (\ref{eq:eqz}) and perform the expectation w.r.t.~$\qZ$ and integration over $ \wtY_{nt}$, to yield
\begin{align}\label{eq:RS_T1}
    &(\ref{eq:saddlePoint1}) = 2 \alpha
    \sum_{o \in \{ \sft,\sfd\}}\beta_o\log \Bigg( \sum_{b=1}^{2^\sfB} \Ex_{v} \Big\{
    \Psi_{b}\left(V_o \right)
    \left( \Psi_{b}\left(V_o \right)
    \right)^{\tau} \Big\} \Bigg),
\end{align}
where we define $V_o \triangleq \sqrt{q_{HX_{o}}} v $ and
 \begin{equation} \label{eq:Psib1}
 \Psi_{b}(V_o)
 \triangleq \Ex_{u} {\left\{ \frac{1}{\sqrt{ 2\pi \sigma_{w}^2}} \int_{r_{b-1}}^{r_b} \rmd y \, e^{-\frac{(\sqrt{2}y - Z_{o} )^2}{2\sigma_{w}^2}} \right\}}
\end{equation}
with $Z_{o} = \sqrt{ c_{HX_{o}}-q_{H}q_{X_{o}} }  u  + V_o$, and $u$ and $v$ being independent \emph{real} standard Gaussian random variables.\footnote{Note that $u_c^{(a)}$ and $v_c$ in (\ref{eq:eqz}) are standard ``complex'' Gaussian random variables. In this paper, we process the real and imaginary parts separately. Therefore, for ease of notation, we have rescaled all the observation outputs $\wty_{n,j}$ and $z_{n,j}^{(a)}$ by $\sqrt{2}$ so that the real and imaginary parts of these random  variables can be regarded as the standard ``real'' Gaussian random variables.} Performing the expectation w.r.t.~$u$, (\ref{eq:Psib1}) can be expressed as (\ref{eq:Psi_def}). Next, we move to the RS calculation of (\ref{eq:saddlePoint2}). Under the RS assumption, the first term of (\ref{eq:saddlePoint2}) can be written as
\begin{multline}
 \frac{1}{K^2}\log\Ex_{\calqH}\Bigg\{\prod_{n=1}^{N} e^{ \left( \sum_{a=0}^\tau\sqrt{\tq_{H_{n}}}\qh_{n}^{(a)}\right)^H
 \left(\sum_{a=0}^\tau\sqrt{\tq_{H_{n}}}\qh_{n}^{(a)}\right) } \\
 \times e^{ (\tc_{H_{n}}-\tq_{H_{n}}) \sum_{a=0}^\tau (\qh_{n}^{(a)})^H\qh_{n}^{(a)}  } \Bigg\}. \label{eq:LT_Rate_Fun1}
\end{multline}
Then we decouple the first quadratic term in the exponent using the Hubbard-Stratonovich transformation and introducing the auxiliary vector $\qy_{H,n} \in \mathbb{C}^{K}$, to rewrite (\ref{eq:LT_Rate_Fun1}) as
\begin{align}\label{eq:LT_Rate_Fun2}
 & \frac{1}{K^2}  \log\Ex_{\calqH}\Bigg\{\int \prod_{n=1}^{N}  \rmD\qy_{H,n}e^{2{\rm Re}\left(\qy_{H,n}^H(\sum_{a} \sqrt{\tq_{H}}\qh_{n}^{(a)})\right)} \nonumber \\
 & \hspace{3.25cm} \times e^{ (\tc_{H}-\tq_{H})\sum_{a}  (\qh_{n}^{(a)})^H\qh_{n}^{(a)} } \Bigg\} \notag \\
&= \alpha \log  \Ex_{H}\!\Bigg\{ \int \rmd Y_{H} e^{ -\left|Y_{H} - \sqrt{\tq_{H}}H \right|^2+\tc_{H}|H|^2 }
 \nonumber \\
 & \qquad \qquad \times  \left(\Ex_{H'}\!\left\{ e^{2\sqrt{\tq_{H}}{\rm Re}(Y_{H}^*H')+ (\tc_{H}-\tq_{H}) |H'|^2 }  \right\}\right)^{\tau} \Bigg\} .
\end{align}
With RS, the second term of (\ref{eq:saddlePoint2}) can be expressed as
\begin{equation} \label{eq:RSQQ}
 -\alpha\Big(  (c_{H}+\tau q_{H}) (\tc_{H}+\tau \tq_{H}) + \tau (c_{H}-q_{H}) (\tc_{H}-\tq_{H}) \Big).
\end{equation}
Similarly, for the first and second terms of (\ref{eq:saddlePoint2}), we have
\begin{align}\label{eq:LT_Rate2_Fun2}
 & \hspace{0.25cm} \sum_{o \in\{ \sft,\sfd \}} \beta_{o} \log \Ex_{X_{o}}\!\Bigg\{ \int \rmd Y_{X_{o}} e^{ -\left|Y_{X_{o}} - \sqrt{\tq_{X_{o}}}X_{o} \right|^2 + \tc_{X_{o}} |X_{o}|^2 } \nonumber \\
 & \hspace{1cm} \times \left(\Ex_{X'_{o}}\!\left\{ e^{ 2\sqrt{\tq_{X_{o}}}{\rm Re}(Y_{X_{o}}^*X'_{o}) + (\tc_{X_{o}}-\tq_{X_{o}}) |X'_{o}|^2 }  \right\}\right)^{\tau} \Bigg\}  \nonumber \\
  &-\sum_{o \in\{ \sft,\sfd \}} \beta_{o} \Big(  (c_{X_{o}}+\tau q_{X_{o}}) (\tc_{X_{o}}
  +\tau \tq_{X_{o}}) \nonumber \\
  & \hspace{2cm} + \tau (c_{X_{o}}-q_{X_{o}}) (\tc_{X_{o}}-\tq_{X_{o}}) \Big).
\end{align}

Putting together (\ref{eq:LT_Rate_Fun2})--(\ref{eq:LT_Rate2_Fun2}), we have the RS expression of $\calF^{(\tau)}$. The parameters $\{c_{H},q_{H},c_{{X_{o}}},q_{{X_{o}}},\tc_{H},\tq_{H},\tc_{{X_{o}}},\tq_{{X_{o}}} \}$ are determined by setting the partial derivatives of $\calF^{(\tau)}$ to zeros. In doing so, as $\tau \rightarrow 0$, it is easy to get that $\tc_{H} =0$, $\tc_{{X_{o}}} = 0$, $c_{H} =\Ex\{|{H}|^2\}$, and $c_{{X_{o}}} = \Ex\{|{X_{o}}|^2\}$. In order to obtain the more meaningful expressions for the other parameters, we introduce two scalar AWGN channels given in (\ref{eq:scalCh}) and their associated parameters in Section IV-A. Equating the partial derivatives of $\calF^{(\tau)}$ w.r.t.~$\{q_{H},q_{{X_{o}}},\tq_{H},\tq_{{X_{o}}} \}$ to zeros, we obtain the fixed-point equations given in (\ref{eq:fxiedPoints}). Finally, taking the partial derivatives of $\calF^{(\tau)}$ at $\tau=0$, and applying the parameters introduced in Section IV-A, we obtain (\ref{eq:FreeEntropyFinal}).

\section*{ Appendix C: Proof of Proposition \ref{Pro_Decoupling}}
Consider the $(n,k)$-th and $(k,t)$-th entries of $\qH$ and $\qX_{\sfd}$, respectively. We will show that the joint moments of the joint distribution of $(H_{nk},X_{\sfd,kt},\whH_{nk},\whX_{\sfd,kt})$ for some indices $(n,k)$ and $(k,t)$ converges to the joint distribution of
\begin{equation}
    \sfP(H) \sfP(Y_{H}|H) \sfP(H|Y_{H})
    \sfP(X_{\sfd}) \sfP(Y_{X_{\sfd}}|X_{\sfd}) \sfP(X_{\sfd}|Y_{X_{\sfd}}),
\end{equation}
independent of $(n,k)$ and $(k,t)$. 
Following \cite{Guo-05IT}, we proceed to calculate the joint moments
\begin{multline}
\Ex\{ {{\rm Re}(H_{nk})^{i_{{\rm R}_{h}}}} {{\rm Im}(H_{nk})^{i_{{\rm I}_{h}}}} {{\rm Re}(\whH_{nk})^{j_{{\rm R}_{h}}}} {{\rm Im}(\whH_{nk})^{j_{{\rm I}_{h}}}}\\
{{\rm Re}(X_{\sfd,kt})^{i_{{\rm R}_{x}}}}{{\rm Im}(X_{\sfd,kt})^{i_{{\rm I}_{x}}}}{{\rm Re}(\whX_{\sfd,kt})^{j_{{\rm R}_{x}}}}{{\rm Im}(\whX_{\sfd,kt})^{j_{{\rm I}_{x}}}}\}
\end{multline}
for arbitrary non-negative integers $i_{{\rm R}_{h}}$, $i_{{\rm I}_{h}}$, $j_{{\rm R}_{h}}$, $j_{{\rm I}_{h}}$, $i_{{\rm I}_{x}}$, $j_{{\rm R}_{x}}$, $j_{{\rm I}_{x}}$, $j_{{\rm I}_{x}}$. To proceed, we define
\begin{align}
    &f_{h}=\sum_{n,k}
    {\left({\rm Re}(H_{nk}^{(0)})\right)^{i_{{\rm R}_{h}}}}
    {\left({\rm Im}(H_{nk}^{(0)})\right)^{i_{{\rm I}_{h}}}} \nonumber \\
    & \hspace{2.5cm}
    \times {\left({\rm Re}(H_{nk}^{(a_{{\rm R}})})\right)^{j_{{\rm R}_{h}}}}
    {\left({\rm Im}(H_{nk}^{(a_{{\rm I}})})\right)^{j_{{\rm I}_{h}}}},
    \nonumber \\
    &f_{x} =\sum_{k,t}
    {\left({\rm Re}(X_{\sfd,kt}^{(0)})\right)^{i_{{\rm R}_{x}}}}
    {\left({\rm Im}(X_{\sfd,kt}^{(0)})\right)^{i_{{\rm I}_{x}}}} \nonumber \\
    & \hspace{2.5cm}
    \times {\left({\rm Re}(X_{\sfd,kt}^{(b_{{\rm R}})})\right)^{j_{{\rm R}_{x}}}}
    {\left({\rm Im}(X_{\sfd,kt}^{(b_{{\rm I}})})\right)^{j_{{\rm I}_{x}}}},
\end{align}
with $a_{\rm R},a_{\rm I} \in \{ 1, \ldots, \tau \}$, $a_{\rm R} \neq a_{\rm I}$ and $b_{\rm R},b_{\rm I} \in \{ 1, \ldots, \tau \}$, $b_{\rm R} \neq b_{\rm I}$. If we define the \emph{generalized} free entropy as
\begin{multline} \label{eq:genFreeEntropy}
\tilde{\calF} = \frac{1}{K^2} \lim_{\tau \rightarrow 0^{+}}
 \frac{\partial^2}{\partial \varepsilon_h \partial \varepsilon_x}\ln \Ex_{\wtqY}{\left\{e^{\varepsilon_h f_{h} \varepsilon_x f_{x}} \sfP^{\tau}(\wtqY)\right\}} \Bigg|_{\varepsilon_h = 0,\varepsilon_x =0},
\end{multline}
it exactly provides the joint moments of interest.

As $\varepsilon_h =0$ and $\varepsilon_x =0$, $\Ex_{\wtqY}\{e^{\varepsilon_h f_{h} \varepsilon_x f_{x}} \sfP^{\tau}(\wtqY)\}$ reduces to $\Ex_{\wtqY}\{\sfP^{\tau}(\wtqY)\}$ given in (\ref{eq:sf_E1}). Therefore, proceeding with the same steps as in Appendix B from (\ref{eq:sf_E1}) to (\ref{eq:sf_E3}), we get
\begin{equation} \label{eq:sft_E3}
    \frac{1}{K^2}\Ex_{\wtqY}{\left\{e^{\varepsilon_h f_{h} \varepsilon_x f_{x}} \sfP^{\tau}(\wtqY)\right\}} = \Extr_{\qQ_H,\calqQ_X,\tilde{\qQ}_H,\tilde{\calqQ}_X} \Big\{ \tilde{\calF}^{(\tau)} \Big\},
\end{equation}
where $\tilde{\calF}^{(\tau)}$ is exactly identical to (\ref{eq:saddlePoint}) while $\calM_{H}^{(\tau)}(\tilde{\qQ}_{H})$ and $\calM_{X}^{(\tau)}(\tilde{\calqQ}_X)$ should be replaced by
\begin{align*}
 &\tilde{\calM}_{H}^{(\tau)}(\tilde{\qQ}_{H}) = \Ex_{\calqH}\Bigg\{e^{\varepsilon_h f_{h}}\prod_{n=1}^{N} e^{\tr\left(\tilde\qQ_{H}\qH_{n}^H\qH_{n}\right)}\Bigg\}, \\
 &\tilde{\calM}_{X}^{(\tau)}(\tilde{\calqQ}_X) = \Ex_{\calqX}\Bigg\{e^{\varepsilon_x f_{x}}\prod_{o \in \{ \sft, \sfd\}} e^{\tr\left(\tilde\qQ_{X_{o}}\qX_{o}^H \qX_{o}\right)}\Bigg\}.
\end{align*}
Thus, except for $\tilde{\calM}_{H}^{(\tau)}(\tilde{\qQ}_{H})$ and $\tilde{\calM}_{X}^{(\tau)}(\tilde{\calqQ}_X)$, the RS expressions for the other parts of $\tilde{\calF}^{(\tau)}$ can be obtained as in Appendix B. We now only need to obtain the RS expressions for $\log \tilde{\calM}_{H}^{(\tau)}(\tilde{\qQ}_{H})$ and $\log \tilde{\calM}_{X}^{(\tau)}(\tilde{\calqQ}_X)$. The generalized free energy in (\ref{eq:genFreeEntropy}) becomes
\begin{align} \label{eq:FinalF_JMoment}
\tilde{\calF}
    &= \int \rmd Y_{H} \rmd Y_{X_{\sfd}}
    ~\Ex_{H}\{ {\left({\rm Re}(H)\right)^{i_{{\rm R}_{h}}}}
    {\left({\rm Im}(H)\right)^{i_{{\rm I}_{h}}}}
    \sfP(Y_{H}|H) \}\notag \\
    & \hspace{2.0cm} \times\underbrace{\Ex_{H}\{ {\left({\rm Re}(H)\right)^{j_{{\rm R}_{h}}}}
    {\left({\rm Im}(H)\right)^{j_{{\rm I}_{h}}}}
    \sfP(H|Y_{H}) \}}_{{\rm Re}(\whH)^{j_{{\rm R}_{h}}} {\rm Im}(\whH)^{j_{{\rm I}_{h}}} }\notag \\
    & \hspace{2.0cm} \times\Ex_{X_{\sfd}}\{ {\left({\rm Re}(X_{\sfd})\right)^{i_{{\rm R}_{x}}}}
    {\left({\rm Im}(X_{\sfd})\right)^{i_{{\rm I}_{x}}}}
    \sfP(Y_{X_{\sfd}}|X_{\sfd})  \}  \notag \\
    & \hspace{2.0cm} \times\underbrace{\Ex_{X_{\sfd}}\{ {\left({\rm Re}(X_{\sfd})\right)^{i_{{\rm R}_{x}}}}
    {\left({\rm Im}(X_{\sfd})\right)^{i_{{\rm I}_{x}}}}
    \sfP(X_{\sfd}|Y_{X_{\sfd}})  \}}_{ {\rm Re}(\whX_{\sfd})^{i_{{\rm R}_{x}}} {\rm Im}(\whX_{\sfd})^{i_{{\rm I}_{x}}} }
\end{align}
which is the joint moments of $(H,X_{\sfd},\whH,\whX_{\sfd})$. Consequently, the joint moment of interest is thus uniquely determined by (\ref{eq:FinalF_JMoment}) due to the Carleman theorem.

\section*{ Appendix D: Proof of Proposition \ref{pro:mseOfH_PilotOnly}}
In this derivation, we consider the case at infinity SNR, i.e. $\sigma_{w}^2 = 0$, and we let ${\sigma_{h}^2 = 1}$ and ${\sigma_{x_{\sft}}^2 = 1}$ without loss of generality. From (\ref{eq:mse_H_PilotOnly}), as $\beta_{\sft} \rightarrow \infty$, we have $\tq_{H} \rightarrow \infty$. An application of the Taylor expansion yields $1-\mse_{H} = (1 + 1/\tq_{H})^{-1} \approx 1 - 1/\tq_{H}$, and thus we have
\begin{equation} \label{eq:approx_1smseH}
\mse_{H} \approx 1/\tq_{H}.
\end{equation}
Let $u = \frac{ \sqrt{2}r_{b}-\sqrt{1-\mse_{H}}v}{\sqrt{ \mse_{H} }}$.
We then evaluate $\chi_{\sft}$ in (\ref{eq:chi_PilotOnly}) by changing the integration variable from $v$ to $u$, which yields
\begin{equation} \label{eq:chi_PilotOnly_2}
 \chi_{\sft} = \frac{c_{\sfB}}{ \sqrt{\mse_{H}(1-\mse_{H})}},
\end{equation}
where
\begin{multline} \label{eq:CB_Val}
c_{\sfB} =
\sum_{b=1}^{2^{\sfB}} \int \frac{e^{-\frac{\left(\sqrt{ \mse_{H} }z-\sqrt{2}r_{b}\right)^2 }{2 (1-\mse_{H})}}}{\sqrt{2 \pi}}
 \\ \times \frac{\left( \phi(z) - \phi{\left(z-\frac{ \sqrt{2}(r_{b}-r_{b-1}) }{\sqrt{ \mse_{H} }} \right)}  \right)^2}
 {\Phi( z ) - \Phi{\left(z-\frac{ \sqrt{2}(r_{b}-r_{b-1}) }{\sqrt{ \mse_{H} }} \right)}}
  \rmd z.
\end{multline}
As $\mse_{H} \rightarrow 0$, $c_{\sfB}$ can be approximated by
\begin{equation}
c_{\sfB} \approx \frac{1}{(2 \pi)^{3/2}}
\sum_{b=1}^{2^{\sfB}} e^{-r_{b}^2} \int{\frac{ e^{-z^2} }{\Phi( z )}} \rmd z,
\end{equation}
which is a quantizer-dependent constant. Using $\tq_{H} = \beta_{\sft}\chi_{\sft}$ given in (\ref{eq:chi_PilotOnly}) and combining (\ref{eq:approx_1smseH}) and (\ref{eq:chi_PilotOnly_2}), we obtain ${\mse_{H} \approx (\beta_{\sft} c_{\sfB})^{-2}}$ or (\ref{eq:mse_HighSNR}) in dB scale, wherein $C_{\sfB} = -20 \log_{10}(c_{\sfB})$. The values of $C_{\sfB}$ in Table \ref{table:Coefficient_CB} are obtained from (\ref{eq:CB_Val}) numerically.

\section*{ Appendix E: A Generalization of Proposition \ref{Pro_FreeEntropy}}

In this Appendix, we extend Proposition \ref{Pro_FreeEntropy} into the case where users have different large-scale fading factors $\sigma_{h_k}^2$. This task can be performed by proceeding with the
same steps as in Appendix A, and the proof is omitted.

Similar to (\ref{eq:scalCh}), we define the scalar AWGN channels for this case:
\begin{subequations} \label{eq:scalCh_MU}
\begin{align}
 Y_{X_{\sfd,k}} &=  \sqrt{\tq_{{X_{\sfd,k}}}} X_{\sfd,k} + W_{X_{\sfd,k}}, \label{eq:scalCh_X_MU} \\
 Y_{H_k} &= \sqrt{\tq_{H_k}} H_k + W_{H_k}, \label{eq:scalCh_H_MU}
\end{align}
\end{subequations}
where $W_{X_{\sfd,k}},W_{H_k} \sim \calN_{\bbC}(0,1)$, $X_{\sfd,k} \sim \sfP_{X_{\sfd}}$, and
$H_k \sim \sfP_{\sfH_k} \equiv \calN_{\bbC}(0,\sigma_{h_k}^2)$.
For ease of notation, we use $\ang{ a_k } = \frac{1}{K} \sum_{k=1}^{K} a_k$
to represent the average over a set $\{ a_k: k =1,\ldots,K \}$.

\begin{Proposition} \label{Pro_FreeEntropy_MU}
As $K \to \infty$, the asymptotic free entropy is
\begin{align}
 \calF &= \alpha \sum_{o \in \{\sft,\,\sfd\} } \beta_{o} \Bigg( \sum_{b=1}^{2^{\sfB}} \int\!\rmD v\,
   \Psi_{b}\left( V_{o} \right)
   \log \Psi_{b}\left( V_{o} \right)
    \Bigg) \notag \\
  & \hspace{-0.35cm} - \alpha  \ang{ I(H_k;Y_{H_k}|\tq_{H_k}) }
  - \beta_{\sfd}  \ang{ I(X_{\sfd,k};Y_{X_{\sfd,k}}|\tq_{X_{\sfd,k}}) } \notag \\
  & \hspace{-0.35cm} + \alpha \ang{ (c_{H_k} - q_{H_k})\tq_{H_k} } + \sum_{o \in \{\sft,\,\sfd\} } \beta_{o} \ang{ (c_{X_{o}}-q_{X_{o,k}}) \tq_{X_{o,k}} }, \label{eq:FreeEntropyFinal_MU}
\end{align}
where
\begin{multline}\label{eq:Psi_def_MU}
 \Psi_{b}(V_{o})
\triangleq \Phi{\left(\frac{ \sqrt{2}r_{b}-V_{o}}{\sqrt{\sigma_{w}^2 + \ang{ c_{H_k}c_{X_{o}}-q_{H_k}q_{X_{\sfd,k}}}}} \, \right)}\\
\quad - \Phi{\left(\frac{ \sqrt{2}r_{b-1}-V_{o}}{\sqrt{\sigma_{w}^2 + \ang{ c_{H_k}c_{X_{o}}-q_{H_k}q_{X_{o,k}}}}} \, \right)};
\end{multline}
${V_{o} \triangleq \sqrt{\ang{q_{H_k}q_{X_{o,k}}}} \,v}$ for $o \in \{\sft,\,\sfd\}$;
$I(H_k;Y_{H_k}|\tq_{H_k})$ is the mutual information between $Y_{H_k}$ and $H_k$;
$I(X_{\sfd,k};Y_{X_{\sfd,k}}|\tq_{X_{\sfd,k}})$ is the mutual information between $Y_{X_{\sfd,k}}$ and $X_{\sfd,k}$; and
$c_{X_{o}} = \sigma_{x_{o}}^2$, $c_{H_k} = \sigma_{h_k}^2$. In (\ref{eq:FreeEntropyFinal_MU}), the other parameters $\{ q_{X_{o,k}}, q_{H_k}, \tq_{X_{o,k}},\tq_{H_k} \}$ are obtained from the solutions to the fixed-point equations
\begin{subequations} \label{eq:fxiedPoints_MU}
\begin{align}
 \hspace{-0.25cm} \tq_{H_k} & = \beta_{\sft,k} q_{X_{\sft,k}} \chi_{\sft} + \beta_{\sfd} q_{X_{\sfd,k}} \chi_{\sfd},
 && \hspace{-0.15cm} q_{H_k} = c_{H_k} - \mse_{H_k},\\
 \hspace{-0.25cm} \tq_{X_{\sft,k}} & = \alpha q_{H_k} \chi_{\sft},
 && \hspace{-0.75cm} q_{X_{\sft,k}} = c_{X_{\sft}} - \mse_{X_{\sft,k}}, \\
 \hspace{-0.25cm} \tq_{X_{\sfd,k}} &= \alpha q_{H_k} \chi_{\sfd},
 &&  \hspace{-0.75cm} q_{X_{\sfd,k}} = c_{X_{\sfd}} - \mse_{X_{\sfd,k}},
\end{align}
\end{subequations}
where $\mse_{X_{\sft,k}} = 0$, and $\mse_{H_k}$ and $\mse_{X_{\sfd,k}}$ are the MSEs of the Bayes-optimal estimators over (\ref{eq:scalCh_H_MU}) and (\ref{eq:scalCh_X_MU}), respectively. In (\ref{eq:fxiedPoints}), we have defined
\begin{equation} \label{eq:chi_def_MU}
 \chi_{o} \triangleq \sum_{b=1}^{2^{\sfB}} \int\!\rmD v \frac{\Big(\Psi'_{b}\left(\sqrt{\ang{q_{H_k}q_{X_{o,k}}}} v \right)\Big)^2}{\Psi_{b}\left(
 \sqrt{\ang{q_{H_k}q_{X_{o,k}}}} v \right)},~\mbox{for }{o \in \{\sft,\,\sfd\}}
\end{equation}
with $\Psi_{b}(\cdot)$ given by (\ref{eq:Psi_def_MU}) and $\Psi'_{b}(V_{o}) = \frac{\partial \Psi_{b}(V_{o})}{\partial V_{o}}$.
\end{Proposition}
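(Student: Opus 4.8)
The plan is to retrace the replica computation of Appendix~B verbatim, introducing \emph{user-resolved} order parameters wherever the channel statistics enter. I would begin as in (\ref{eq:sf_E1}), writing $\Ex_{\wtqY}\{\sfP^{\tau}(\wtqY)\}$ as an average over $\tau+1$ replicas of $(\qH,\qX)$, the only change being that now $H_{nk}^{(a)}\sim\calN_{\bbC}(0,\sigma_{h_k}^2)$, so the channel prior no longer factors into identical terms across $k$. Consequently, in place of the single overlaps $\qQ_{H}$ and $\qQ_{X_o}$ I would introduce per-user overlaps $Q_{H_k}^{ab}=\frac{1}{N}\sum_{n}(H_{nk}^{(a)})^{*}H_{nk}^{(b)}$ and $Q_{X_{o,k}}^{ab}=\frac{1}{T_o}\sum_{t\in\calT_o}(X_{kt}^{(a)})^{*}X_{kt}^{(b)}$, which self-average over the $N$ antenna rows and the $T_o$ symbol columns, respectively. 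In the homogeneous case these reduce to the exchangeable overlaps of Appendix~B.

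The key structural observation is that the two halves of the free-entropy functional see the user dependence differently. In the output term (\ref{eq:saddlePoint1}) the relevant object is $Z_{nt}=\frac{1}{\sqrt{K}}\sum_k H_{nk}X_{kt}$; by the central limit theorem over the $K$ users, the replicated vector $\qz_{nt}$ is Gaussian with cross-replica covariance $\frac{1}{K}\sum_k Q_{H_k}^{ab}Q_{X_{o,k}}^{ab}=\langle Q_{H_k}^{ab}Q_{X_{o,k}}^{ab}\rangle$ for $t\in\calT_o$. Imposing the RS ansatz $Q_{H_k}^{ab}=(c_{H_k}-q_{H_k})\delta_{ab}+q_{H_k}$ and $Q_{X_{o,k}}^{ab}=(c_{X_o}-q_{X_{o,k}})\delta_{ab}+q_{X_{o,k}}$ collapses this into a diagonal part $\langle c_{H_k}c_{X_o}\rangle$ and an off-diagonal part $\langle q_{H_k}q_{X_{o,k}}\rangle$, so that $Z_{nt}^{(a)}=\sqrt{\langle c_{H_k}c_{X_o}-q_{H_k}q_{X_{o,k}}\rangle}\,u_c^{(a)}+\sqrt{\langle q_{H_k}q_{X_{o,k}}\rangle}\,v_c$. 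This reproduces exactly the user-averaged arguments appearing in $\Psi_b$ of (\ref{eq:Psi_def_MU}) and in $\chi_o$ of (\ref{eq:chi_def_MU}): the output term only ever sees the averaged products. By contrast, the prior terms $\calM_{H}^{(\tau)}$ and $\calM_{X}^{(\tau)}$ factor over $n$ and $t$ into per-user contributions, so the Hubbard--Stratonovich decoupling yields a \emph{distinct} scalar AWGN channel (\ref{eq:scalCh_H_MU}) and (\ref{eq:scalCh_X_MU}) for each $k$, and the corresponding log-partition functions become the user-indexed mutual informations $I(H_k;Y_{H_k}|\tq_{H_k})$ and $I(X_{\sfd,k};Y_{X_{\sfd,k}}|\tq_{X_{\sfd,k}})$, entering (\ref{eq:FreeEntropyFinal_MU}) under $\langle\cdot\rangle$.

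With these two facts the rest is the bookkeeping of Appendix~B carried out unchanged. I would assemble the RS expression of $\calF^{(\tau)}$ from the averaged output term and the per-user prior terms, then set the partial derivatives with respect to $\{q_{H_k},q_{X_{o,k}},\tq_{H_k},\tq_{X_{o,k}}\}$ to zero. Differentiating the output term — which depends on these variables only through the averages $\langle q_{H_k}q_{X_{o,k}}\rangle$ — produces the common factors $\chi_\sft,\chi_\sfd$, whereas differentiating the per-user prior terms gives $q_{H_k}=c_{H_k}-\mse_{H_k}$ and $q_{X_{o,k}}=c_{X_o}-\mse_{X_{o,k}}$; together these yield the fixed-point system (\ref{eq:fxiedPoints_MU}). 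Finally, taking $\partial/\partial\tau$ at $\tau=0$ as in (\ref{eq:LimF}) delivers (\ref{eq:FreeEntropyFinal_MU}).

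I expect the main obstacle to be precisely the reconciliation just described: verifying that the averaged overlaps $\langle q_{H_k}q_{X_{o,k}}\rangle$ are the \emph{only} channel through which the heterogeneous statistics enter the output term, so that $\chi_\sft,\chi_\sfd$ remain single global quantities shared by all users, while the fixed-point equations retain full per-user resolution through the $k$-dependent scalar channels. Making the chain rule through this average explicit — so that each $\partial/\partial q_{H_k}$ correctly pulls down a factor $q_{X_{o,k}}\chi_o$ and hence $\tq_{H_k}=\beta_{\sft,k}q_{X_{\sft,k}}\chi_\sft+\beta_{\sfd}q_{X_{\sfd,k}}\chi_\sfd$ — is where the care is needed; everything else follows the homogeneous derivation line by line, which is why the detailed proof may reasonably be omitted.
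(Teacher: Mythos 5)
Your proposal is correct and follows exactly the route the paper intends: Appendix~E simply states that the result is obtained ``by proceeding with the same steps as in Appendix~B'' and omits the details, and your plan --- per-user overlaps $Q_{H_k}^{ab}$, $Q_{X_{o,k}}^{ab}$ self-averaging over antennas/symbols, the CLT collapsing the output term onto the user-averaged products $\ang{q_{H_k}q_{X_{o,k}}}$ (hence global $\chi_\sft,\chi_\sfd$), and the priors factoring into $k$-indexed scalar AWGN channels --- is precisely that generalization. Your identification of the one genuinely delicate point (the chain rule through the user average when deriving $\tq_{H_k}$) is apt, and the proposal supplies more explicit detail than the paper itself does.
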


{\renewcommand{\baselinestretch}{1.1}
}

\end{document}